\newcolumntype{d}[1]{D{.}{.}{#1}}
\begin{document}





\frontmatter


\centering
\LARGE{\textbf{Exploring the Disproportion Between Scientific Productivity and Knowledge Amount}}
\vspace{5mm}

\newcommand\blfootnote[1]{%
\begingroup
\renewcommand\thefootnote{}\footnote{#1}%
\addtocounter{footnote}{-1}%
\endgroup
}

\large{
Luoyi Fu$^1$*, Huquan Kang$^2$*, Jianghao Wang$^3$, Ling Yao$^4$,\\ Xinbing Wang$^5$\Envelope, Chenghu Zhou$^6$\Envelope\blfootnote{
*\quad The two authors contribute equally to the work. \\ 
\Envelope\quad The two both are corresponding authors.  \\
\textbf{Email:} $^1$yiluofu@sjtu.edu.cn, $^2$kinghiqian@sjtu.edu.cn, $^3$wangjh@lreis.ac.cn, $^4$yaoling@lreis.ac.cn, $^5$xwang8@sjtu.edu.cn, $^6$zhouch@lreis.ac.cn \\
\textbf{Affiliation:} $^{125}$Shanghai Jiao Tong University, Shanghai 200240, China. $^{346}$State Key Laboratory of Resources and Environmental Information System, Institute of Geographic Sciences and Natural Resources Research, Chinese Academy of Sciences, Beijing 100101, China.\\ \\ 
\textbf{Note:} \\ During the exploration of this work, part of the results have been formed into a thesis. Therefore, this article is an expanded version of the thesis with more in-depth study, more supplementary aspects and more technical details.}}
\vspace{10mm}

\justifying
The pursuit of knowledge is the permanent goal of human beings. Scientists have developed numerous approaches to the representation of knowledge, and to extracting, discovering, learning, and reasoning about it. However, knowledge is often dynamic, going through human beings for knowing, invention, propagation, and problem-solving. Scientific literature, as the major medium that carries knowledge between scientists, exhibits explosive growth during the last century. Despite the frequent use of many tangible measures, such as citation\cite{citation1,citation2,citation3,citation4}, impact factor\cite{impactfactor} and g-index\cite{gindex}, to quantify the influence of papers from different perspectives based on scientific productivity, it has not yet been well understood how the relationship between scientific productivity and knowledge amount turns out to be\cite{assess1,assess2,assess3,assess4,assess5}, i.e., how the knowledge value of papers and knowledge amount vary with development of the discipline. This raises the question of whether high scientific productivity equals large knowledge amount. Here, building on rich literature on academic conferences and journals, we collect 185 million articles covering 19 disciplines published during 1970 to 2020, and establish citation network research area to represent the knowledge flow from the authors of the article being cited to the authors of the articles that cite it under each specific area. As a result, the structure formed during the evolution of each scientific area can implicitly tells how the knowledge flows between nodes and how it behaves as the number of literature (productivity) increases. By leveraging Structural entropy in structured high-dimensional space and Shannon entropy in unstructured probability space, we propose the Quantitative Index of Knowledge (KQI), which is taken as the subtraction between the two types of entropy, to reflect the extent of disorder difference (knowledge amount) caused by structure (order). With the aid of KQI, we find there exists significant disproportion between the growth of scientific productivity and knowledge amount. In other words, although the published literature shows an explosive growth, the amount of knowledge (KQI) contained in it obviously slows down, and there is a threshold after which the growth of knowledge accelerates. We demonstrate that these two observations can be explained by a BA model along with percolation theory, allowing us to prove quantitatively that polynomial literature publications will ultimately bring out linear knowledge growth. We find this phenomenon to be remarkably universal across diverse disciplines. And for a network of size $n$, the threshold of knowledge growth acceleration meets the average amount of references $m \sim a\log{n+1}$, where $a$, varying from discipline to discipline, represents the amount of related literature required in order to understand one specific article. This threshold reflects the sufficient preliminary study needed for a research area/paper/researcher to rise abruptly based on its accumulated strength, with a smaller value of $a$ meaning that the corresponding discipline is more inclined to produce disruptive works rather than developing ones. KQI helps researchers to construct the knowledge structure vein of each scientific research area, which is found to approximately follow the classic 80/20 rule with roughly 17\% selected extraordinary articles (with high KQI) set off by 83\% ordinary articles (with low KQI). Also, KQI is manifested to have power in digging out influential researchers/papers/institutions that might not be precisely portraited by those aforementioned quantity-based measures. These results not only deepen our quantitative understanding of the correlation between knowledge and scientific productivity, but meanwhile imply the importance of a dialectic viewing of the scientific articles with different values of KQI, i.e., those with low KQI might represent the repeatable, developing explorations that serve as the basis of scientific research, and the co-existence of both low and high KQI papers are indispensable for promoting a healthy academic ecological environment.

\tableofcontents
\listoffigures*
\listoftables*
\listofalgorithms*



\mainmatter


\chapter{Introduction}

\section{Background}

With the growth of academic big data, the contradiction between the ability of human knowledge acquisition and the speed of information generation is increasingly prominent. Nowadays, academic literature has entered an explosive growth period with the further increase of scientific research investment. Statistics of conference papers and journal articles from 1800 to 2020 show that the number of literature today is three times that of 20 years ago and fifteen times that of 50 years ago (Fig. \ref{fig:Bibliometric}a). While the large volume of scientific papers might produce some ground-breaking knowledge, it also places researchers under the dilemma of reading fatigue. This predicament may be still sustainable for the newly emerging disciplines, but for disciplines that have undergone long-term development, the requirements for researchers to conduct research are demanding. By classifying literature into 292 sub-disciplines of concern to researchers, 39\% sub-disciplines have over 1 million literature, and 99\% have over 100 thousand literature, with which researchers are overwhelmed (Fig. \ref{fig:Bibliometric}b). Given the increasing interest in alleviating the burden of literature research for scientists, here we ask: Can we untangle the role of knowledge from productivity, and ease within the reading fatigue? To address this question, we try to quantify the knowledge amount of scientific productivity in multiple disciplines. 

\begin{figure}[!hbtp]
    \centering
    \subcaptionbox{Growth in the number of literature. The number of the literature showed an accelerated growth with the increase of years. The total number of literature in 2020 (198 million) has reached 3.3 times that of 2000 (61 million) and 15 times that of 1970 (13 million).}[6.4cm]{
        \includegraphics[height=5cm]{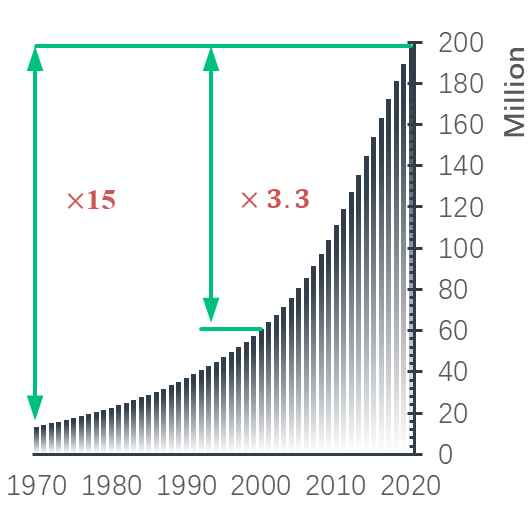}
    }
    \hspace{1cm}
    \subcaptionbox{Quantity distribution of literatures in different disciplines. Of all 292 second-level disciplines, more than 39\% have more than 1 million literatures by 2020 and almost all of them have more than 100 thousand literatures. }[6.4cm]{
        \includegraphics[height=5cm]{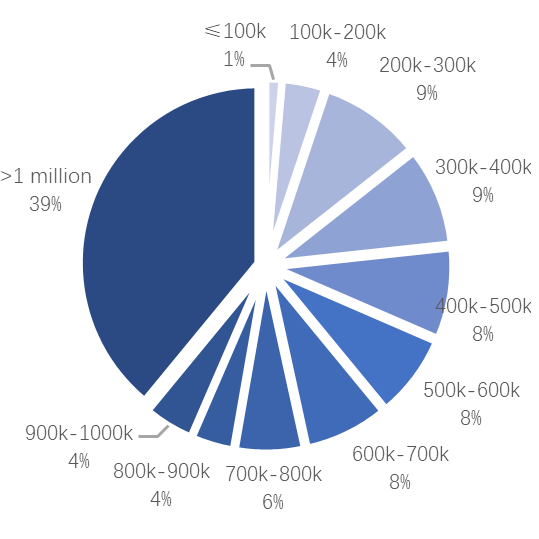}
    }
    \caption{Bibliometric statistics in the last fifty years.}
    \label{fig:Bibliometric}
\end{figure}

Productivity, representing the number of publications, or impact, often approximated by the number of citations a publication receives, are frequently used metrics to gauge a scientist's performance. Many quantitative indicators have been proposed around productivity. Citations\cite{citation1,citation2,citation3,citation4}, the epitome of scientific influence, and citation-based measures like Hirsch index\cite{hirsch}, g-index\cite{gindex}, impact factors\cite{impactfactor} and eigenfactors\cite{eigenfactor}, can help researchers to screen out influential literature from different dimensions respectively. However, those measures are simply statistical indicators based on the citation quantity that focuses on the portrayal of influence \cite{negativecitation}, falling short of reflecting how a knowledge inspires a new knowledge between different published scientific articles. Thus, we still lack a quantitative metric to reflect the knowledge contained in scientific productivity. Here we note that influence and knowledge are two different concepts. Influence is local and it is only affected by direct citations, while knowledge is global and any change in the citation network may cast impact on it. Therefore, influence cannot reflect the importance of a paper's location in the citation network, that is, different from knowledge. Knowledge can be thought of as a structural derivation of influence. In that sense, quantifying knowledge, not just influence, can help understand the knowledge value of articles, reveal the development of academic knowledge brought about by the expansion of the discipline, and further alleviate this contradiction between knowledge acquisition and information generation. 

Although the measurement of knowledge is important, however, the definition of knowledge in computer science, especially the quantification of knowledge, remains largely unexplored. Since the Gettier problem \cite{Gettier} was proposed, scientists have been arguing on the definition of knowledge for more than half a century. In 1989, Ackoff talks about the DIKW pyramid \cite{ackoff1989data} (data, information, knowledge, and wisdom) and describes the positioning of knowledge, which is very instructive for us. Knowledge is based on information, and the quantification of information has quite mature theories, such as Shannon's entropy\cite{shannonentropy}, Li Angsheng's structural entropy\cite{structuralentropy} and so on. Therefore, information theory can serve as a clue to the quantification of knowledge.

\section{Our Contributions}

In this article, we leverage the graph to serves as the best structural representation of abundant underlying correlations in massive data and aim to explore the amount of knowledge contained in the whole academic citation network, where citations between scientific articles imply a knowledge flow from the authors of the article being cited to the authors of the articles that cite it.

Motivated partly by the structural entropy \cite{structuralentropy}, whose idea is to find a partitioning tree that can best represent the information contained in the graph with minimum uncertainty of the graph structure involved. Analogously, we propose the Quantitative Index of Knowledge (KQI), taken as the subtraction between structural entropy in structured high-dimensional space and Shannon entropy, its specialized version in unstructured probability space, to reflect the extent of disorder difference (amount of knowledge) caused by structure (order). 

We explicitly derive KQI by virtue of our proposed Knowledge Tree, a partitioning tree constructed by preserving important knowledge inheritance relationships from the original graph and disclose some interesting phenomenon: \textbf{(1)} the polynomial growth of the graph size overall leads to a linear knowledge growth over time, but meanwhile \textbf{(2)} there exists a threshold of the mean degree $m \sim a\log n + 1$, above which, the knowledge growth accelerates significantly ($a$ is the required number of active neighbors for each node to be active and $n$ is graph size). \textbf{(3)} We also find a similar 80/20 rule in academic citation networks, where a small amount of literature represents the majority of knowledge. These phenomenons not only reveal the development rule of knowledge but also tell the significance of ordinary literature, whose tremendous proliferation eventually brings continuous progress on knowledge.

Experiments of academic citation networks, which cover hundreds of millions of literature in 19 disciplines published during the last 50 years, not only confirm our theoretical analysis but also reveals the effectiveness of KQI in characterizing the knowledge development pattern in different disciplines and finding out important literature, authors, affiliations, and countries of each discipline. Compared with citations, h-index, impact factors, and PageRank, KQI all performs obvious advantages:

\begin{itemize}
    \item Compared with citations, KQI retrieves the valuable articles with few citations and filter the useless articles with many citations.
    \item Compared with h-index, KQI better covers the high-impact authors, like Nobel Prize and Turing Award winners.
    \item Compared with impact factor, KQI evaluates articles more justly. Articles in a good journal can be bad, and articles in a bad journal can be good.
    \item Compared with PageRank, KQI keeps many of the same advantages. But KQI is better on interpretability, formulation, complexity, and additivity.
\end{itemize}

In the remaining chapters of this article, we will elaborate on the above contributions in turn.

Chapter \ref{Chapter:KQI} starts from the question of what is knowledge, and gradually introduces the Quantitative Index of Knowledge (KQI), which is defined as the difference between Shannon entropy and structural entropy. 

Chapter \ref{Chapter:KQIKT} puts forward the concept of Knowledge Tree, and then explicitly deduces the formula of KQI in the knowledge tree. Detailed derivations, proofs, and algorithms are presented in this chapter.

Chapter \ref{Chapter:rule} discusses the linear growth law, knowledge boom threshold, and 80/20 rule of KQI, through many experiments with a dataset of more than 185 million literature.

Chapter \ref{Chapter:application} introduces some applications of KQI in structure extraction and ranking, and compares them with other related metrics. The results show that KQI has obvious advantages in these tasks.

Chapter \ref{chapter:casestudy} takes the field of channel capacity, deep learning, and geoscience as examples, and demonstrates some specific detailed results and analysis.

The last chapter summarizes the contributions and future directions of this article.


\chapter{Quantitative Index of Knowledge}
\label{Chapter:KQI}
In this chapter, we will introduce our newly proposed concept, Quantitative Index of Knowledge (KQI). In doing so, based on the JTB theory and the characteristics of the academic network, we first give the definition of knowledge in a graph network. Then the concept of structural entropy is introduced and the intuition that KQI is measured by the difference between Shannon entropy and structural entropy is put forward and explained.

\section{What is Knowledge}
Plato once proposed the JTB theory (the view that knowledge is justified true belief) thousands of years ago\cite{JTB}, although Gettier questioned it\cite{Gettier}, and since then there is no accurate definition of knowledge. Numerous studies have shown the structure of knowledge\cite{knowledgestructure1,knowledgestructure2}, and the important role of the network in explaining knowledge\cite{knowledgenetwork}. Considering a large number of association relationships in academic data, such as citations, we modeled these in an academic citation network\cite{citation1} and tried to find knowledge in this structured space. Academic citation network can be seen as an evolving directed acyclic graph, in which each node represents knowledge to be measured and each edge represents the succession relationship of knowledge. The meaning of the word evolving is that a new node, as well as its inheritance relationship, can be added into the network, and the weights of old node relations decay over time. Coincidentally, some indications of knowledge are seen in the citation network (Fig. \ref{fig:citationnet}): 
\begin{itemize}
    \item The subsequent citations of the paper reflect the widespread recognition of the paper, i.e. relatively \textbf{truth}\cite{relativetruth} in the network, also similar to the "the relativity of knowledge"\cite{knowledgerelativity1,knowledgerelativity2} proposed in the epistemology.
    \item The references of literature also reflect whether the source of the paper is reliable and \textbf{justified}.
\end{itemize}

\begin{figure}[!htp]
  \centering
  \includegraphics[width=10cm]{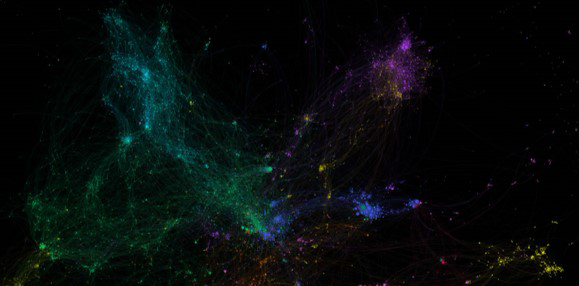} \\
  \caption{Academic citation network.}
 \label{fig:citationnet}
\end{figure}

Therefore, \textbf{knowledge can be expressed as the paper and the structure on which it depends}, though it may be hard to understand. In other words, the effect brought by knowledge is embodied in the association of academic networks. We cannot tell knowledge that is not belonging to the structure, just like we cannot tell a node that is outside of a topological network. Based on this assumption, we present several characteristics of knowledge in the evolutionary network:
\begin{itemize}
    \item \textbf{Knowledge has strict hierarchies and no circular reasoning.} Although this puts forward higher requirements on the reliability of data sources, the data in the knowledge domain is often of higher quality, which is different from the general big data.
    \item \textbf{Knowledge evolves over time.} Knowledge is not fixed, which means that new knowledge is constantly being included, to refresh the existing knowledge structure.
    \item \textbf{Knowledge can be forgotten.\cite{ageing1,ageing2}} The aging of knowledge is manifested as the decline of the inheritance relation between new knowledge and old knowledge. That is it cannot be recalled from present knowledge of its origin.
    \item \textbf{Structure matters than knowledge itself.} The value of knowledge depends more on its position in the knowledge structure than on its content. And the structure is more objective than the content.
\end{itemize}

Now that we have a basic understanding of the knowledge in the graph, then we can talk about how do we measure knowledge.

\section{How to Measure Knowledge}
\label{sec:KQI}

To better understand the following argument, let us first understand the relationship between information, entropy, and knowledge.

The amount of information in the world is unknown, but the amount of information in the human eye is limited. The amount of information under human vision is expanded through discovering the unknown, and the amount of knowledge increases through regularizing information. Discovering and regularizing are the two important processes for human beings to accept new things and form new knowledge. There are three states in these two processes: unknown information, unstructured discovered information, knowledge. Among these, we have known that unstructured discovered information plus knowledge correspond to Shannon entropy which defines the uncertainty of information (Fig. \ref{fig:intuition}). Therefore if we want to measure knowledge, we just need to quantify unstructured discovered information. Coincidentally, in 2016, Prof. Ansheng Li proposed structural entropy\cite{structuralentropy}, whose core is to quantify structural information. Unstructured discovered information corresponds to structure entropy which defines the uncertainty of structured information,

\begin{figure}[!htp]
  \centering
  \includegraphics[width=7cm]{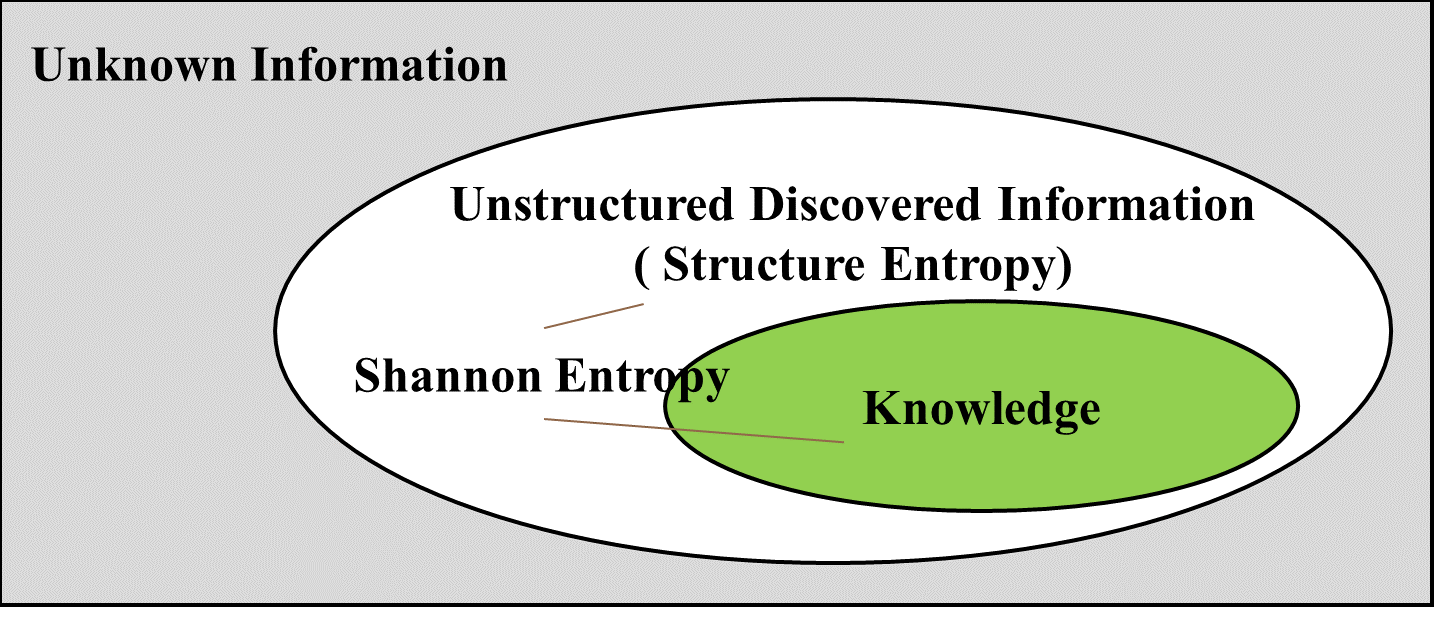} \\
  \caption{Venn diagrams of information, entropy, and knowledge.}
 \label{fig:intuition}
\end{figure}

In this section, in order to better understand the rest of this article, we will first briefly introduce the related concepts of structural entropy, and then explain how knowledge is measured in detail.

\subsection{Structural Entropy}

As mentioned above, structure entropy\cite{structuralentropy} is proposed by Angsheng Li in 2016 to measure the structural information and complexity of the network. Different from Shannon entropy, which only considers probability distribution, structure entropy takes into account the layered community structure in the network. Its core is to use the structure as the inherent information to compress the length of the encoding.

\begin{definition}[Volume\cite{structuralentropy}]
The volume of a community $\alpha$ is the sum of the weighted degree of all the nodes within it, defined as:
$$V_{\alpha} = \sum_{i\in\alpha}d_i,$$
where $d_i$ is the weighted degree of node $i$. 
\end{definition}

\begin{definition}[One-Dimensional Structural Information \cite{structuralentropy}]
The one-dimensional structural information of connected and undirected graphs is defined as follows:

\begin{equation}
    \mathcal{H}^1(G) = H\left(\frac{d_1}{2m},...,\frac{d_n}{2m}\right) = -\sum_{i=1}^{n}\frac{d_i}{2m}\log_2 \frac{d_i}{2m},
\end{equation}

where $n$ and $m$ represent the number of nodes and edges of $G$ respectively, $d_i$ is the degree of the  $i$-th node. This definition is essentially the Shannon entropy of the degree distribution in the graph $G$.
\end{definition}

\begin{definition}[Two-Dimensional Structural Information \cite{structuralentropy}]
Given a connected and undirected graph $G(V, E)$ and a partition $\mathcal{P} = \{X_1,...,X_L\}$ of $V$,  the two-dimensional structural information is defined as follows:
$$\mathcal{H}^{\mathcal{P}}(G) = -\sum_{j=1}^L \frac{V_j}{2m}\sum_{i=1}^{n_j}\frac{d_i^{(j)}}{V_j}\log_2\frac{d_i^{(j)}}{V_j} - \sum_{j=1}^L \frac{g_j}{2m}\log_2\frac{V_j}{2m},$$
where $L$ is the number of modules in partition $\mathcal{P}$, $n_j$ is the number of nodes in module $X_j$, $d_i^{(j)}$ is the degree of the $i$-th node of $X_j$, $V_j$ is the volume of module $X_j$, and $g_j$ is the number of edges with one endpoint in module $X_j$. This formula is equivalent to the code of each module plus the code of each node within the communities. That is, all nodes within a module share a common community code.
\end{definition}

\begin{definition}[High-Dimensional Structural Information \cite{structuralentropy}]
Suppose that $\mathcal{T}$ is a partitioning tree of a connected and undirected graph $G(V, E)$, then the high-dimensional structural information is defined as follows:

\begin{equation}
    \mathcal{H}^{\mathcal{T}}(G) = \sum_{\alpha\in\mathcal{T}, \alpha\ne\lambda} - \frac{g_\alpha}{2m}\log_2\frac{V_\alpha}{V_{\alpha^-}},
\end{equation}

where $g_\alpha$ is the number of edges from nodes in $T_\alpha$ to nodes outside $T_\alpha$, $V_\alpha$ is the volume of set $T_\alpha$. This definition is an extension of two-dimensional structural information and reuses the code of common nodes to a greater extent.
\end{definition}

\subsection{Intuitions for Entropy Difference}

Intrinsically, structural entropy measures the extent of disorder of a structured network. Thus the idea from Fig. \ref{fig:intuition} is that we can differentiate between the Shannon entropy and structural entropy by knowledge. In fact, it is easy to understand that the function of knowledge is to organize disordered data into ordered data.\cite{knowledgefunction1,knowledgefunction2} In physics, entropy is to measure the extent of disorder\cite{entropy}, so entropy can also be used to measure knowledge. Shannon entropy\cite{shannonentropy}, or one-dimensional structural information $H^1$, measures the extent of disorder in the discrete probability distribution, while structural entropy\cite{structuralentropy}, or high-dimensional structural information $H^T$, measures the extent of disorder after organizing the discrete one into a structured network. The above two exactly correspond to the process of knowledge turning disorder into order. 

More fundamentally, from the perspective of coding, structural entropy realizes the sharing of community encodings because of the existence of a partitioning tree, i.e. community inheritance relation. This sharing mechanism saves the number of encodings, which is the value that structure brings (Fig. \ref{fig:kqi}). Therefore the difference between the two entropies is the role of knowledge in it, which we define as the Quantitative Index of Knowledge (KQI).
\begin{equation}
    \mathcal{K}^{\mathcal{T}} = \mathcal{H}^1 - \mathcal{H}^{\mathcal{T}}.
\end{equation}

\begin{figure}[!htp]
  \centering
  \includegraphics[width=7cm]{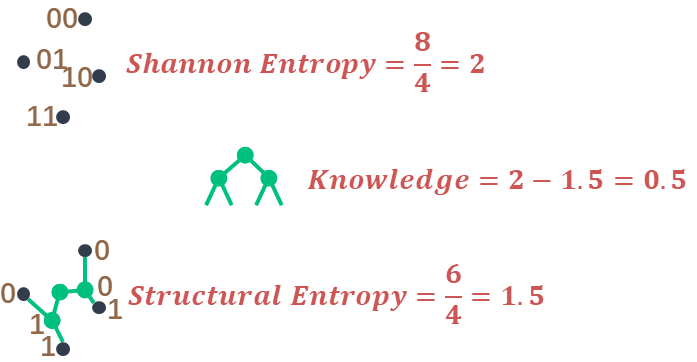} \\
  \caption[Three information-related quantitative indicators.]{Three information-related quantitative indicators. Shannon entropy encodes a discrete probability distribution, where it takes an average of 2 bits to uniquely identify an object. Structural entropy takes structure into account as opposed to Shannon entropy, and it takes an average of 1.5 bits because of the shared encoding caused by the structure. The difference between these two entropies is precisely the difference caused by structure, namely the KQI.}
 \label{fig:kqi}
\end{figure}

Based on the above definition, obtaining KQI also depends on an obbligato link: the partitioning tree $\mathcal{T}$, which is hard to get. In actual application, according to the original method of structure entropy, the algorithm to find the optimal partitioning tree has no guarantee of complexity, which cannot be directly applied to the big data scenario. Now the only thing we still need to be clear about is the partitioning tree $\mathcal{T}$.

Given the large number of structural relationships implied in knowledge, it may not be necessary to explore the optimal community structure. We can get KQI directly from the structure of the knowledge, rather than looking for the potential structure of the knowledge network. Then how to find the structure of knowledge explicitly? In the research of structural entropy, the structure of a network is specified using a partitioning tree\cite{structuralentropy}, which is a kind of hierarchical partitioning community. This is very similar to the structure of knowledge: any knowledge is either inferred from some existing knowledge (belongs to the parent knowledge community) or is a pure axiom. However, unlike the partitioning tree with only one root community\cite{structuralentropy}, the structure of knowledge can be seen as a combination of many partitioning trees, because we have a lot of axiomatic knowledge, and these partitioning trees overlap each other because a piece of knowledge can be inspired by multiple knowledge, i.e. belong to multiple knowledge communities simultaneously. We will elaborate on how to obtain a partitioning tree that is appropriate for representing knowledge in the next chapter.


\chapter{KQI in Knowledge Tree}
\label{Chapter:KQIKT}
In the previous chapter, we defined KQI as the difference between Shannon's entropy and structural entropy. Due to the diversity and complexity of structural entropy caused by the partitioning tree, we attempt to define an interpretable tree as an example to explicitly give the expression of KQI. In order to retain the original inheritance relationship of knowledge as much as possible, we model this inheritance relationship as a Knowledge Tree, and further implement a kind of KQI by using the knowledge tree as the partitioning tree.

It is worth noting that a knowledge tree is different from a partitioning tree. Each node in the knowledge tree corresponds to both a node and a community (the structure on which the node depends) in the partitioning tree, that is, each node in the knowledge tree corresponds to a piece of knowledge. Therefore, \textbf{knowledge is used indiscriminately in this chapter to describe the nodes in the knowledge tree.} The detailed transformation of the knowledge tree and partitioning tree will be presented in Section \ref{sec:knowledgetreeaspartitioning}.

In this chapter, we first give the definition of the knowledge tree and then give the formula of KQI and its proof. To make it easier for the reader to understand, we will start with the simplest single inherited n-source knowledge tree and then move on to the general knowledge tree. In the end, the algorithm of KQI is given.

\section{Definition of Knowledge Tree}
\label{sec:knowledgetree}
We have previously claimed that academic citation networks can be considered as a directed acyclic graph, in which a knowledge inheritance tree can be obtained from either breadth-first search or depth-first search starting from any source node. Such a knowledge inheritance tree can be used as a partitioning tree, though it can not adequately represent the inheritance relationship. In this simplest case, the knowledge tree is just a tree. Although this situation hardly works in practice, it can be a good transition to understanding the general definition of the knowledge tree. 

Just to make it easier for the reader, considering multiple source nodes, we first give the definition of a single inherited knowledge tree that can be applied.

\begin{definition}[Single-Inherited Knowledge Tree]
For a directed acyclic graph with $n$ source nodes, at most one parent node is selected for each node, and a super root node is introduced to connect to all the $n$ source nodes. Then we obtain the single-inherited knowledge tree.
\end{definition}

A single-inherited knowledge tree is reflected as essentially a forest. We say it is a knowledge tree by assuming that all the source nodes, that is, axiomatic knowledge, originate from something more primitive, which we call the super root node. A single inheritance tree expresses that every knowledge comes from one and only one source. It is important to note that this assumption is different from our common sense. We expect to be able to consider multiple knowledge sources and multiple different partitioning trees simultaneously. In order to make full use of the knowledge structure, we introduce the multiple-inherited knowledge tree. 

\begin{definition}[Multiple-Inherited Knowledge Tree]
For a directed acyclic graph with $n$ source nodes, multiple important parent nodes are selected for each node, and a super root node is introduced to connect to all the $n$ source nodes. Then we obtain the multiple-inherited knowledge tree, or for short, knowledge tree.
\end{definition}

Because we just give the definition straight away above, a multiple-inherited knowledge tree does not look like an actual tree. We will go into more detail about what kind of tree does it represents.

Decomposing a knowledge tree into an ordinary tree is done by diving nodes from leaves to roots. One observation is that loops in the knowledge tree are impossible, which ensures that decomposition is completed in finite steps. Regardless of the super root node, the knowledge tree should be a forest composed of $n$ trees, but there is some overlap between the $n$ trees that we do not want. The overlap between the trees comes from the crossover, that is, the in-degree of a node is greater than one. Crossover on node $v$ indicates that the knowledge sub-tree starting from $v$ belongs to the common result of multiple parent knowledge (nodes), each of which occupies a certain proportion. Thus, decomposing is to divide each node and its sub-tree into parts equal to in-degree, and we will get $n$ ordinary trees (Fig. \ref{fig:knowledgetree}). In the same way as a single-inherited knowledge tree, when we introduce the root node, the knowledge tree will look like an actual tree.

\begin{figure}[!htp]
  \centering
  \includegraphics[width=12cm]{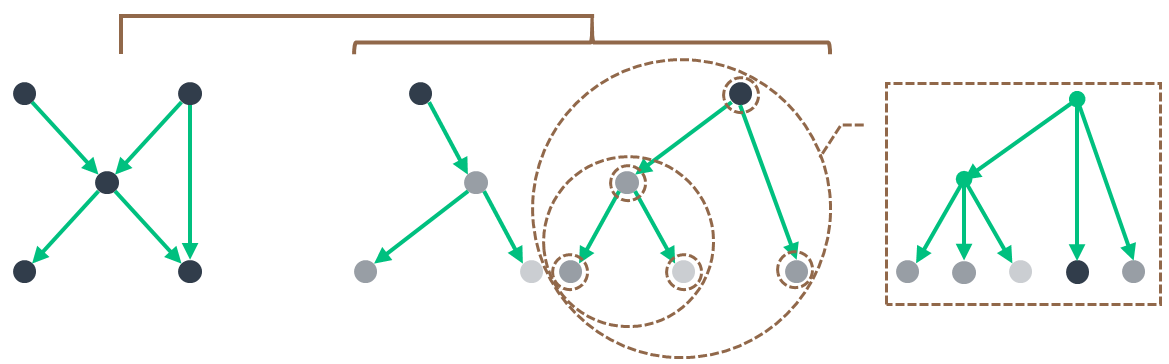} \\
  \caption[Knowledge tree, decomposing process, and transformation to partitioning tree.]{Knowledge tree, decomposing process, and transformation to partitioning tree. Multiple trees can be split from a knowledge tree, with volume assigned (color depth of the node). The nodes that broken up eventually become fragments with less weight. Trees imply the structure of layered communities (brown dotted circle), i.e. the partitioning tree (tree in the brown dotted box). The green nodes in partitioning tree imply the community rather than actual nodes.}
 \label{fig:knowledgetree}
\end{figure}

Here we can find that there seem to create more nodes in the knowledge tree. To explain this, we need to introduce a concept in the multiple-inherited knowledge tree: fragment.
\begin{definition}[Fragments]
Fragments are defined as new nodes created by splitting the node with multiple sources. One Fragment is a part of the node and belongs to only one source at the same time.
\end{definition}

Therefore, a multiple-inherited knowledge tree can consider all inheritance relationships of a graph by fragmenting the knowledge. Obviously, there is a trade-off between accuracy and efficiency, when we taking into account several important pieces of knowledge. In particular, an untreated directed acyclic graph can itself be considered as a knowledge tree, which completely retains all structural information in the graph with all the parent knowledge (nodes) matters. We can adjust the knowledge tree selection according to the actual requirements of different tasks.

\section{Expression of KQI in Knowledge Tree}

In Section \ref{sec:KQI}, we have recognized the calculation method of KQI from the macroscopic view. Next, we will use the proposed knowledge tree as a partitioning tree to explicitly derive the formula of KQI. As with the introduction of the knowledge tree, we will discuss in detail the formula of KQI and its rationality by using the simplest single inherited knowledge tree and then extend the conclusion to the general knowledge tree.

\subsection{KQI in Single-Inherited Knowledge Tree}
\label{sec:knowledgetreeaspartitioning}
Because a single-inherited knowledge tree is simply an ordinary tree and the knowledge in it will not be divided into multiple fragments, our discussion of formulas will be greatly simplified.

We know that the partitioning tree is a hierarchy of nodes, a tree of virtual existence. Each non-leaf node in the partitioning tree represents a community of nodes, and only the leaf nodes actually exist. Although each node in the knowledge tree corresponds to one of real existence, we say that the knowledge tree can be used as a partitioning tree because the knowledge tree naturally represents the structure of knowledge. Ancestral knowledge each forms its own knowledge community, much like the hierarchical community of the partitioning tree. The partitioning tree represented by the knowledge tree looks like this (Fig. \ref{fig:knowledgetree}):
\begin{itemize}
    \item All nodes in the knowledge tree are leaf nodes in the partitioning tree.
    \item All non-leaf nodes in the knowledge tree are duplicated as non-leaf nodes in the partitioning tree.
    \item Each node $v$ has $d_v^{out}+1$ sub-communities, that is $d_v^{out}$ derived knowledge communities plus knowledge $v$ itself.
\end{itemize}

Using the knowledge tree as a partitioning tree in this way, we give the formula for KQI.

\begin{theorem}[KQI in Single-Inherited Knowledge Tree]
\label{theorem:kqi_single}
Given single-inherited knowledge tree $\mathcal{T}$, the KQI expression for the node representing community $\alpha$ is as follows:
\begin{equation}
\label{eq:KQI_KT}
    \mathcal{K}_\alpha^{\mathcal{T}} = -\frac{V_\alpha}{W}\log_2 \frac{V_{\alpha}}{V_{\alpha^-}},
\end{equation}
where $V_\alpha$ is the volume of community $\alpha$, $W$ is the graph size and $\alpha^-$ is the parent community of $\alpha$.
\end{theorem}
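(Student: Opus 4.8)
The plan is to obtain the per‑community formula by writing both entropies in $\mathcal{K}^{\mathcal{T}}=\mathcal{H}^{1}-\mathcal{H}^{\mathcal{T}}$ as sums indexed by the communities $\alpha$ of the partitioning tree that the knowledge tree induces, comparing the two sums term by term, and then exploiting the fact that a single‑inherited knowledge tree is an ordinary tree, so that the community attached to a knowledge node $v$ is exactly its subtree $T_v$ with parent community $T_{\alpha^-}$, and no node is split into fragments.

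The first step is to recast $\mathcal{H}^{1}$ in the same shape as $\mathcal{H}^{\mathcal{T}}$. The high‑dimensional structural information is already community‑indexed, $\mathcal{H}^{\mathcal{T}}=\sum_{\alpha\neq\lambda}-\frac{g_\alpha}{2m}\log_2\frac{V_\alpha}{V_{\alpha^-}}$. For $\mathcal{H}^{1}$ I use that the volume ratios telescope along the chain of ancestors of a partitioning‑tree leaf $\{i\}$: since $V_\lambda=2m$ one has $\log_2\frac{d_i}{2m}=\log_2\frac{V_{\{i\}}}{V_\lambda}=\sum_{\alpha\ni i,\,\alpha\neq\lambda}\log_2\frac{V_\alpha}{V_{\alpha^-}}$. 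Substituting this into $\mathcal{H}^{1}=-\sum_i\frac{d_i}{2m}\log_2\frac{d_i}{2m}$ and interchanging the two summations (the coefficient multiplying $\log_2\frac{V_\alpha}{V_{\alpha^-}}$ collects to $\sum_{i\in T_\alpha}\frac{d_i}{2m}=\frac{V_\alpha}{2m}$) gives $\mathcal{H}^{1}=-\sum_{\alpha\neq\lambda}\frac{V_\alpha}{2m}\log_2\frac{V_\alpha}{V_{\alpha^-}}$. Subtracting term by term then yields the generic identity $\mathcal{K}_\alpha^{\mathcal{T}}=-\frac{V_\alpha-g_\alpha}{2m}\log_2\frac{V_\alpha}{V_{\alpha^-}}$, valid for an arbitrary partitioning tree.

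The second step is to specialise to the single‑inherited knowledge tree, using the three bullet points of Section~\ref{sec:knowledgetreeaspartitioning}: every knowledge node is a leaf of the partitioning tree, every non‑leaf knowledge node is additionally duplicated as an internal community node whose set of leaves is precisely $T_v$, and each such community node's children are its derived communities together with the singleton ``$v$ itself''. Two observations then finish it: (i) a singleton partitioning‑tree node has $V-g=0$, so all leaf nodes, and in particular the ``$v$ itself'' nodes, contribute nothing to $\mathcal{K}^{\mathcal{T}}$ and only the genuine community nodes $\alpha=T_v$ survive; (ii) because there is no fragmentation, volumes add along the tree, the graph size $W$ takes the role of $2m=V_\lambda$, and on the surviving community nodes the numerator $V_\alpha-g_\alpha$ collapses to the knowledge‑tree volume $V_\alpha$. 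Inserting these into the generic identity reproduces exactly $\mathcal{K}_\alpha^{\mathcal{T}}=-\frac{V_\alpha}{W}\log_2\frac{V_\alpha}{V_{\alpha^-}}$. I would close by checking consistency: summing $\mathcal{K}_\alpha^{\mathcal{T}}$ over all communities must return $\mathcal{H}^{1}-\mathcal{H}^{\mathcal{T}}$, and the toy graph of Fig.~\ref{fig:kqi} should reproduce the $2$/$1.5$/$0.5$‑bit split.

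I expect the second step --- and within it the precise dictionary between the structural‑entropy data $(V_\alpha,g_\alpha,2m)$ of the induced partitioning tree and the knowledge‑tree data $(V_\alpha,W)$ --- to be the main obstacle. One has to make fully explicit how the knowledge tree is regarded as a partitioning tree of a concrete weighted graph, how the super root and the ``$v$ itself'' singleton sub‑communities are handled, and why the boundary term $g_\alpha$ is absorbed exactly (so that $V_\alpha-g_\alpha$ becomes $V_\alpha$ and $2m$ becomes $W$) rather than only approximately; the telescoping identity and the term‑by‑term subtraction of the first step are routine once that bookkeeping is fixed.
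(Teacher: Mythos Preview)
Your proposal is correct and is essentially the paper's argument: both hinge on telescoping the log‑ratios $\log_2\frac{V_\alpha}{V_{\alpha^-}}$ along the chain of ancestors and then swapping the order of summation so that one piece is recognised as $\mathcal{H}^1$. The paper merely runs the computation in the opposite order --- it works directly in the directed setting, writes $g_\alpha=\sum_{v\in\alpha}d_v^{in}-V_\alpha$, splits $\mathcal{H}^{\mathcal{T}}$ accordingly, and then telescopes the $d_v^{in}$‑part to recover $\mathcal{H}^1$ (invoking a small ``padding'' lemma, Lemma~\ref{lemma:equivalenttree}, to make every layer of the tree cover the full leaf set); your version, which telescopes $\mathcal{H}^1$ first and then subtracts, avoids that lemma. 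The bookkeeping you flag as the obstacle is exactly the directed identity $g_\alpha=\sum_{v\in\alpha}d_v^{in}-V_\alpha$: once you adopt the paper's conventions $V_\alpha=\sum_{v\in\alpha}d_v^{out}$ and $W=\sum_{(i,j)}w_{ij}$, the ``$V_\alpha-g_\alpha$'' numerator in your generic identity becomes $\sum_{v\in\alpha}d_v^{in}-g_\alpha$, which equals $V_\alpha$ on a subtree community (no out‑edges leave a subtree) and $0$ on a singleton, confirming both of your observations (i) and (ii).
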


\subsubsection{Proof of Theorem \ref{theorem:kqi_single}}

\begin{lemma}
\label{lemma:equivalenttree}
Any partitioning tree $\mathcal{T}$ of height $h$ can be reconstituted as $\mathcal{T}'$ to satisfy the following conditions: 1) For any height $1 \le h_i \le h$, the communities $\alpha_1,...,\alpha_{n_i}$ represented by all $n_i$ nodes of height $h_i$ are mutually exclusive, and $\bigcup_{1 \le j \le n_i}\alpha_j = S$, where $S$ is the set of all leaf nodes. 2) Before and after the reconstitution, the entropy of the original nodes and the overall entropy of the partitioning tree remains unchanged. We say that $\mathcal{T}$ and $\mathcal{T}'$ are equivalent.
\end{lemma}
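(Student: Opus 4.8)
The plan is to obtain $\mathcal{T}'$ from $\mathcal{T}$ by a single ``padding'' operation: stretch every leaf that sits above the bottom level downward by appending a chain of duplicate singleton communities, until all leaves lie at a common level. First I would record the free observation that in any partitioning tree the children of a node partition that node's community, so siblings are disjoint and, more generally, two nodes at the same level whose communities are not nested are automatically disjoint (their lowest common ancestor sits strictly above, and they lie in different subtrees of it). Hence the \emph{mutual exclusivity} half of condition~1 already holds for \emph{every} partitioning tree; the only thing that can fail at a fixed level $h_i$ is \emph{coverage} of $S$, because some singletons $\{v\}$ have ``terminated'' at leaves strictly below level $h_i$.

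To repair coverage I would, for each leaf $\alpha=\{v\}$ occurring at level $k<h$, insert new nodes $\beta_1,\dots,\beta_{h-k}$, each representing the \emph{same} community $\{v\}$, arranged as a path $\alpha\to\beta_1\to\cdots\to\beta_{h-k}$, so that the new leaf $\beta_{h-k}$ reaches level $h$. (I take ``height''/level to be distance from the root; after the padding all leaves lie at a common distance from the root, and the argument is symmetric under the other convention.) Doing this for every short leaf produces $\mathcal{T}'$, which is still a legitimate partitioning tree with exactly one leaf per vertex of $G$. Now at each level $h_i\in\{1,\dots,h\}$ and each vertex $v$, exactly one node at that level contains $v$: an ancestor community of $\{v\}$ if $h_i$ is above $v$'s leaf, and the community $\{v\}$ itself (the original copy or one of its duplicates) if $h_i$ is at or below it. Thus the level-$h_i$ communities are pairwise disjoint and union to $S$, giving condition~1.

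For condition~2, the point is that each inserted node $\beta_j$ has $V_{\beta_j}=d_v=V_{\beta_{j-1}}$ (with $\beta_0:=\alpha$), so its contribution to the structural entropy is
\begin{equation}
-\frac{g_{\beta_j}}{2m}\log_2\frac{V_{\beta_j}}{V_{\beta_j^-}} \;=\; -\frac{g_{\beta_j}}{2m}\log_2 1 \;=\; 0 .
\end{equation}
Moreover no \emph{original} node has its volume or its parent's volume altered: an original non-leaf node keeps its parent untouched, and the parent of any original leaf is now that leaf or one of its duplicates, all of volume $d_v$. Hence every pre-existing summand of $\mathcal{H}^{\mathcal{T}}(G)$ is unchanged, so the per-node entropies of the original nodes are preserved and $\mathcal{H}^{\mathcal{T}'}(G)=\mathcal{H}^{\mathcal{T}}(G)$, establishing condition~2 and hence equivalence.

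I do not anticipate a genuine obstacle. The only thing that needs care is the bookkeeping of level indices (height from the root versus height from the leaves, and where the trivial top/bottom levels fall relative to the range $1\le h_i\le h$), together with the clean statement that mutual exclusivity is automatic for partitioning trees so that padding only ever has to supply missing coverage and never creates an overlap. A minor secondary point is to note that the natural bijection between the vertices of $G$ and the leaves of the tree is preserved, so that ``$S$'' denotes the same set before and after the reconstruction.
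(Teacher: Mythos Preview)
Your proposal is correct and follows essentially the same idea as the paper: pad every short branch with copies of the terminal singleton so that each level becomes a full partition of $S$, and observe that the inserted copies contribute $\log 1=0$ to the structural entropy while leaving the original summands intact. The paper carries this out level by level (phrasing it as the entropy of a leaf being ``transferred'' to its copy), whereas you add the whole chain $\alpha\to\beta_1\to\cdots\to\beta_{h-k}$ at once; the only slip is the sentence about ``the parent of any original leaf,'' which should read ``the parent of any new node,'' but the intended argument is clear and sound.
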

\begin{proof}[Lemma \ref{lemma:equivalenttree}]
The proof of this lemma is very simple. Since the partition tree is a tree, it must satisfy the mutual exclusion. And the layer of height 1 must satisfy $\bigcup_{1 \le j \le n_1}\alpha_j = S$. Then, if the layer of height 2 has $\bigcup_{1 \le j \le n_2}\alpha_k = S'$ and $S' \ne S$, create a copy $\alpha'$ in the second layer for each $\alpha \in S-S'$ in the first layer. In this process, the entropy of $\alpha$ is transferred to $\alpha'$ and the entropy of $\alpha$ becomes 0, where the whole thing stays the same. Then go on to the layer of height 3 and so on until all the layers are satisfied.
\end{proof}

\begin{proof}[Theorem \ref{theorem:kqi_single}]
Suppose the knowledge tree $\mathcal{T}$ of height $h$ is selected as the partitioning tree. For one node $\alpha\in\mathcal{T}$, i.e. a community of $V$, denote $V_\alpha$ as volume of community $\alpha$, and $g_\alpha$ as boundary of community $\alpha$. More rigorously,
$$V_\alpha = \sum_{v\in \alpha} d_v^{out}, \quad g_\alpha = \sum_{(i,j)\in E, i\notin\alpha, j\in\alpha} w_{ij}.$$
Denote the parent of $\alpha$ in $\mathcal{T}$ as $\alpha^-$, the parent of $\alpha^-$ as $\alpha^{(-2)}$, and so on. The structure entropy in KEG can be written as:
$$\mathcal{H}^{\mathcal{T}} = \sum_{\alpha\in\mathcal{T}} -\frac{g_\alpha}{W}\log_2 \frac{V_{\alpha}}{V_{\alpha^-}},$$
where $W = \sum_{(i, j)\in E} w_{ij}$.

According to Lemma \ref{lemma:equivalenttree}, we can divide the nodes in $\mathcal{T}$ into $h$ layers, denoted as $l_i (1 \le i \le h)$, and we have $\forall 1 \le i \le h, \bigcup_{\alpha\in l_i} \alpha= V$. Thus, we can rewrite the expression for structure entropy as follows:
$$\mathcal{H}^{\mathcal{T}} = \sum_{1 \le i \le h}\sum_{\alpha\in l_i} -\frac{g_\alpha}{W}\log_2 \frac{V_{\alpha}}{V_{\alpha^-}}.$$

As boundary of community $\alpha$, $g_\alpha$ can also be rewritten as: $g_\alpha = \sum_{v \in \alpha} d_v^{in} - V_\alpha$. Then the formula becomes two parts:
\begin{equation*}
\begin{split}
    First Term &= \sum_{1 \le i \le h}\sum_{\alpha\in l_i} -\frac{\sum_{v \in \alpha} d_v^{in}}{W}\log_2 \frac{V_{\alpha}}{V_{\alpha^-}}\\
    &= \sum_{1 \le i \le h}\sum_{\alpha\in l_i}\sum_{v \in \alpha} -\frac{d_v^{in}}{W}\log_2 \frac{V_{v^{(-i+1)}}}{V_{v^{(-i)}}}\\
    &= \sum_{v \in V} -\frac{d_v^{in}}{W} \sum_{1 \le i \le h}\log_2 \frac{V_{v^{(-i+1)}}}{V_{v^{(-i)}}}\\
    &= \sum_{v \in V} -\frac{d_v^{in}}{W}\log_2 \frac{V_{v}}{V_{v^{(-h)}}}
\end{split}
\end{equation*}
where $V_v = d_v^{out}, V_{v^{(-h)}} = W$. The first term is exactly Shannon entropy of graph. So, we have
$$\mathcal{K}^{\mathcal{T}} = SecondTerm = \sum_{1 \le i \le h}\sum_{\alpha\in l_i} -\frac{V_\alpha}{W}\log_2 \frac{V_{\alpha}}{V_{\alpha^-}}.$$

Applying above lemma again, we give the formula of KQI as follows:
$$\mathcal{K}^{\mathcal{T}} = \sum_{\alpha\in\mathcal{T}} -\frac{V_\alpha}{W}\log_2 \frac{V_{\alpha}}{V_{\alpha^-}}.$$

The proof is completed.
\end{proof}

\subsubsection{Interpretation of KQI formula}
\label{sec:explainKQI}
Next, let us understand what KQI means from the Eq. \eqref{eq:KQI_KT}. 

At the micro-level, in terms of encodings, each knowledge in the knowledge tree, like a structural pivot, plays a role in encoding reuse. KQI is a measure of how many encodings of each knowledge can be reused. The degree of reuse is expressed as the encoding length ($-\log_2 \frac{V_{\alpha}}{V_{\alpha^-}}$) times the probability of reuse ($\frac{V_\alpha}{W}$).

At the macro-level, the structure is a form of organization, such as a hierarchy and a network. A knowledge tree, equivalent to multilevel classification, is a hierarchical structure based on knowledge evolution. The difference between Shannon entropy and structure entropy strips away discrete information other than the structure itself. When we use the knowledge tree as the partitioning tree, KQI only retains the structure of the knowledge tree. In this sense, KQI can be used as an expression of how strongly structured the knowledge is.

Starting with the concept of knowledge, KQI is related to acceptability ($\frac{V_\alpha}{W}$) and dependability ($-\log_2 \frac{V_{\alpha}}{V_{\alpha^-}}$). Acceptability refers to whether knowledge is recognized, i.e. how much knowledge is inherited directly or indirectly from that knowledge. Dependability refers to whether the source of knowledge is equally or more recognized, i.e. how fully the parents can support the generation of the knowledge. Acceptability and dependability, elements of scientific knowledge\cite{scientific1,scientific2}, correspond to the first and second terms of the Eq. \eqref{eq:KQI_KT}, as well as to truth and justification as we talked about earlier in JTB theory\cite{JTB} (Fig. \ref{fig:acceptibility_dependability}).

\begin{figure}[!htp]
  \centering
  \includegraphics[width=7cm]{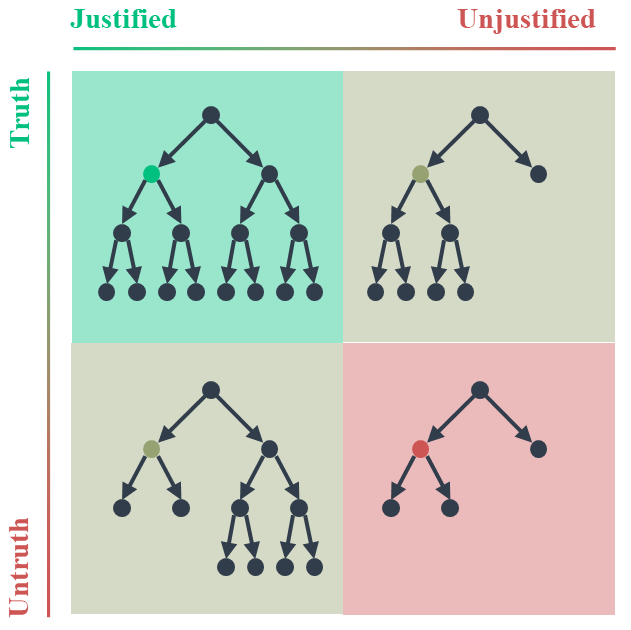} \\
  \caption[KQI-JTB matrix.]{KQI-JTB matrix. For any knowledge, volume means truth, and the difference from parents tells whether it is justified. Knowledge of maximum KQI (green) should be justified truth. Knowledge with at least one truth and justification come next, i.e. unjustified truth or justified untruth. Unjustified untruth has little knowledge (red).}
 \label{fig:acceptibility_dependability}
\end{figure}

We can find the higher the probability of the random walk reaching the node, or the higher the level in the partitioning tree, the more repetitive encoding is reduced and the more strongly structured the node is. In other words, the more connectivity and the more subsequent knowledge, the higher KQI. Intuitively, under the definition of KQI, knowledge with higher KQI is developed from very solid knowledge and can form its own system. If knowledge is derived from the knowledge that is not recognized, its KQI is small and can be understood as unacknowledged.

\subsection{Generalized KQI in Knowledge Tree}

Now we have the formula for KQI in the simplest single-inherited knowledge tree. In this part, we will extend KQI to a general knowledge tree.

In the discussion of Section \ref{sec:knowledgetree}, we know that a knowledge tree can also be regarded as a real tree. But the process of decomposing into a tree can lead to some splitting of knowledge. So if we want to get the KQI of one knowledge, we have to consider all the fragments that are formed by splitting that knowledge. Here we give the definition of KQI in the general knowledge tree.
\begin{definition}[KQI in Knowledge Tree]
\label{def:KQI_KT}
Given knowledge tree $\mathcal{T}$ and denoted $S_\alpha$ as the set of all fragments of $\alpha$, the KQI of $\alpha$ is defined as:
    \begin{equation}
    \label{eq:KQI_fragment}
    \mathcal{K}_\alpha^{\mathcal{T}} = \sum_{\alpha_i\in S_\alpha} -\frac{V_{\alpha_i}}{W} \log_2\frac{V_{\alpha_i}}{V_{\alpha_i^-}}.
    \end{equation}
\end{definition}

If we use Eq. \eqref{eq:KQI_KT} directly, this is a fairly complex process because the number of fragments is exponentially related to the depth of knowledge. If each knowledge originated from two parents, the $k$-th generation of knowledge will be split into $2^k$ fragments. Here we present a method equivalent to the above calculation, but with much less complexity.

\begin{theorem}[More Efficient Equivalent KQI in Knowledge Tree]
\label{theorem:KQI}
Given knowledge tree $\mathcal{T}$, the KQI expression defined in Definition \ref{def:KQI_KT} is equivalent to following expression:
\begin{equation}
\label{eq:KQI}
\mathcal{K}_\alpha^{\mathcal{T}} = - \sum_{1 \le i \le d_\alpha^{in}} \frac{V_\alpha^{\mathcal{T}}}{d_\alpha^{in}W} \log_2\frac{V_\alpha^{\mathcal{T}}}{d_\alpha^{in}V_{\alpha_i^-}^{\mathcal{T}}},
\end{equation}
in which
$$V_\alpha^{\mathcal{T}} = d_\alpha^{out} + \sum_{1 \le i \le d_\alpha^{out}} \frac{V_{\alpha_i^+}^{\mathcal{T}}}{d_{\alpha_i^+}^{in}},$$
where $\alpha_i^-$ represents the $i$-th parent of $\alpha$, $\alpha_i^+$ represents the $i$-th child of $\alpha$, $V_\alpha^{\mathcal{T}}$ represents the sum of all the $V_\alpha$ of different fragments in $\mathcal{T}$. It can be proved that Eq. \eqref{eq:KQI} is equivalent to Eq. \eqref{eq:KQI_fragment}.
\end{theorem}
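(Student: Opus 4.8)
The plan is to make the informal ``decomposition into ordinary trees'' from Section~\ref{sec:knowledgetree} fully precise and then read off both identities from it. View the knowledge tree as a DAG rooted at the super root; its unfolding into an ordinary (single‑inherited) tree identifies the fragments of a node $\alpha$ with the root‑to‑$\alpha$ paths, and Definition~\ref{def:KQI_KT} is exactly Theorem~\ref{theorem:kqi_single} applied fragment‑by‑fragment on that ordinary tree. The splitting rule ``divide each node and its subtree into parts equal to its in‑degree'' assigns to the fragment given by a path $P=(r=v_0,\dots,v_m=\alpha)$ the weight $w(P)=\prod_{\ell=1}^{m} 1/d_{v_\ell}^{in}$. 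The first step I would carry out is a partition‑of‑unity lemma: $\sum_{P:\,r\to v} w(P)=1$ for every node $v$. This is an easy induction along a topological order, since $\sum_{P:\,r\to v} w(P)=\frac{1}{d_v^{in}}\sum_{u\to v}\sum_{P':\,r\to u} w(P')$, each inner sum equals $1$ by hypothesis, and there are exactly $d_v^{in}$ in‑neighbours.

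Next I would establish the structural fact on which everything hinges: each fragment volume factors as $V_{\alpha_i}=w(P)\cdot U_\alpha$, where $U_\alpha$ depends only on $\alpha$ and the part of the graph below it. Every descendant fragment inside the subtree of $\alpha_i$ corresponds to a path $P\cdot Q$ with $Q$ running from $\alpha$ downward, and it contributes $d_u^{out}\,w(P)\,w_\alpha(Q)$ to the subtree volume, where $w_\alpha(Q)$ is the analogous product over the nodes of $Q$ strictly below $\alpha$; the common factor $w(P)$ pulls out, leaving $U_\alpha:=\sum_{Q:\,\alpha\to u} d_u^{out}\,w_\alpha(Q)$. Summing $V_{\alpha_i}$ over all fragments of $\alpha$ and invoking the partition‑of‑unity lemma gives $\sum_{\alpha_i\in S_\alpha} V_{\alpha_i}=U_\alpha$, so $U_\alpha=V_\alpha^{\mathcal{T}}$ by the very definition of $V_\alpha^{\mathcal{T}}$; peeling the children of $\alpha$ off the sum defining $U_\alpha$ and recognising the subsums as $U_{\alpha_k^+}/d_{\alpha_k^+}^{in}$ produces the stated recursion $V_\alpha^{\mathcal{T}}=d_\alpha^{out}+\sum_{1\le k\le d_\alpha^{out}} V_{\alpha_k^+}^{\mathcal{T}}/d_{\alpha_k^+}^{in}$, and evaluating it at the super root recovers $V_r^{\mathcal{T}}=W$.

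With the factorisation in hand the equivalence is a short computation. For a fragment $\alpha_i$ on path $P$ whose parent node is the $j$‑th parent $\alpha_j^-$, its parent fragment sits on $P$ with the last edge removed, so $V_{\alpha_i}/V_{\alpha_i^-}=w(P)V_\alpha^{\mathcal{T}}/\!\left(w(P')V_{\alpha_j^-}^{\mathcal{T}}\right)=V_\alpha^{\mathcal{T}}/\!\left(d_\alpha^{in}V_{\alpha_j^-}^{\mathcal{T}}\right)$, which crucially no longer depends on the individual fragment. Substituting into Eq.~\eqref{eq:KQI_fragment}, grouping the fragments of $\alpha$ according to their parent node, and using that the weights $w(P)$ of the fragments hanging below a fixed parent $\alpha_j^-$ sum to $1/d_\alpha^{in}$ (the partition‑of‑unity lemma applied to $\alpha_j^-$, each extended by the edge into $\alpha$), collapses the sum over $S_\alpha$ into a sum over the $d_\alpha^{in}$ parents and yields exactly Eq.~\eqref{eq:KQI}.

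I expect the main obstacle to be the bookkeeping in the second step: pinning down the fragment/path correspondence cleanly — in particular reading ``in‑degree'' as the number of in‑edges and paths as edge sequences when the DAG has parallel edges — and checking that the splitting really distributes volume multiplicatively, so that the factorisation $V_{\alpha_i}=w(P)\,U_\alpha$ and the recursion for $V_\alpha^{\mathcal{T}}$ hold simultaneously and consistently. Once that factorisation is secured, the partition‑of‑unity lemma does all the remaining work and the algebra is routine.
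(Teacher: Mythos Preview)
Your proposal is correct and follows essentially the same strategy as the paper: group the fragments of $\alpha$ by their parent node, observe that the ratio $V_{\alpha_i}/V_{\alpha_i^-}$ is constant within each group (the paper's Lemma~\ref{lemma:split}), identify that constant with the aggregate ratio via $\sum_i V_{\alpha_i}=V_\alpha^{\mathcal T}$ (the paper's Lemma~\ref{lemma:volume}), and collapse the sum. Your path--weight formalism and partition-of-unity lemma make precise what the paper leaves informal and derive both of its lemmas from the single factorisation $V_{\alpha_i}=w(P)\,V_\alpha^{\mathcal T}$, but the route is the same.
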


Before proving that the above formulas are equivalent, let us introduce two lemmas first.

\begin{lemma}
\label{lemma:split}
Assume $\beta$ has $n$ fragments $\beta_1,...,\beta_n$, and $\alpha$ is a child of $\beta$, then we have $\frac{V_{\alpha_1}}{V_{\beta_1}}=...=\frac{V_{\alpha_n}}{V_{\beta_n}}$.
\end{lemma}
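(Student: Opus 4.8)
The plan is to reduce the statement to the symmetry of the decomposition procedure of Section~\ref{sec:knowledgetree}. Recall that fragments are produced bottom-up: once all of a node's descendants have been finalized, the node together with its current subtree is replicated into a number of copies equal to its in-degree, each copy being assigned an equal share of every volume inside that subtree. So the first step is to make this ``equal share'' precise, which I would do by an induction on the bottom-up processing order (equivalently, on the length of the longest directed path leaving the node): the claim to carry along is that the $d_v^{in}$ fragments of any node $v$ are mutually isomorphic as weighted subtrees, hence each fragment of $v$ has volume $V_v^{\mathcal{T}}/d_v^{in}$, and more generally, for every node $w$ in $v$'s subtree, the copies of $w$ lying beneath the different fragments of $v$ all carry the same volume.

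With that in hand the lemma is immediate. Applying it to $\beta$: since $d_\beta^{in}=n$, the fragments $\beta_1,\dots,\beta_n$ are isomorphic $1/n$-scaled copies of $\beta$'s finalized subtree, so $V_{\beta_1}=\dots=V_{\beta_n}=V_\beta^{\mathcal{T}}/n$. Because $\alpha$ is a child of $\beta$, it sits in that subtree and $\alpha_j$ is by definition the copy of $\alpha$ hanging beneath $\beta_j$; the isomorphism of the $\beta_j$ forces the weighted subtrees rooted at $\alpha_1,\dots,\alpha_n$ to coincide, so $V_{\alpha_1}=\dots=V_{\alpha_n}$ as well, and dividing gives $V_{\alpha_1}/V_{\beta_1}=\dots=V_{\alpha_n}/V_{\beta_n}$. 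For later use — Lemma~\ref{lemma:split} will feed into Theorem~\ref{theorem:KQI} — I would also record the explicit value: the total volume attributed to $\alpha$ through its parent $\beta$ is exactly the summand $V_\alpha^{\mathcal{T}}/d_\alpha^{in}$ appearing in the recursion $V_\beta^{\mathcal{T}}=d_\beta^{out}+\sum_i V_{\alpha_i^+}^{\mathcal{T}}/d_{\alpha_i^+}^{in}$, and it is split evenly over the $n$ fragments of $\beta$, so $V_{\alpha_j}=V_\alpha^{\mathcal{T}}/(n\,d_\alpha^{in})$ and the common ratio is $V_\alpha^{\mathcal{T}}/(d_\alpha^{in}V_\beta^{\mathcal{T}})$.

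The only real obstacle is bookkeeping rather than mathematics: one must pin down what ``the fragment $\alpha_j$ corresponding to $\beta_j$'' denotes when $\alpha$ has parents other than $\beta$, or is reachable from $\beta$ along several paths, so that copies of $\alpha$ proliferate throughout the decomposed tree. The safe route is to phrase everything through the recursion for $V^{\mathcal{T}}$ and its summands: the term $V_\alpha^{\mathcal{T}}/d_\alpha^{in}$ is precisely the contribution of the \emph{direct}-child copy of $\alpha$ to $\beta$, so identifying $V_{\alpha_j}$ with the $1/n$-share of that term is unambiguous. I would also verify the degenerate cases ($\beta$ a source node, so $n=1$ and the assertion is vacuous; $\alpha$ a leaf, so $V_{\alpha_j}=0$) to confirm that the scaling identities persist there.
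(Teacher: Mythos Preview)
Your plan captures the paper's one idea --- the decomposition replicates subtrees uniformly --- but you route it through a stronger intermediate claim that does not hold in the generality in which the lemma is actually used. You equate $n$ with $d_\beta^{in}$ and assert $V_{\beta_1}=\dots=V_{\beta_n}=V_\beta^{\mathcal{T}}/n$; that is true only for the \emph{first} split of $\beta$ into its $d_\beta^{in}$ parent-copies. In the proof of Theorem~\ref{theorem:KQI} the lemma is invoked with $\beta=\alpha_i^-$ and $n=|S_{\alpha_i^-}|$, the \emph{total} number of fragments of $\beta$ after all ancestor-level splits have propagated down, and those fragments need not have equal volume. Concretely, if $\beta$ has two parents $p_1,p_2$ with $d_{p_1}^{in}=2$ and $d_{p_2}^{in}=1$, the final fragments of $\beta$ carry volumes $V_\beta^{\mathcal{T}}/4,\ V_\beta^{\mathcal{T}}/4,\ V_\beta^{\mathcal{T}}/2$, so ``equal numerators and equal denominators, hence equal ratios'' breaks.

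The paper's one-line proof avoids this by arguing ratio-invariance directly: every split applied to a fragment of $\beta$ --- whether it originates at $\beta$ or at an ancestor --- scales the entire subtree beneath that fragment by the same factor, so $V_{\alpha_j}/V_{\beta_j}$ is unchanged by any split and is therefore the same for all $j$. Your isomorphism setup already contains this observation; just drop the equal-volume detour and read off that ratios of volumes inside a replicated subtree are preserved. That closes the gap and still recovers the explicit common value $V_{\alpha_j}/V_{\beta_j}=V_\alpha^{\mathcal{T}}/(d_\alpha^{in}V_\beta^{\mathcal{T}})$ you computed, now valid for all $n$ fragments rather than only the first $d_\beta^{in}$.
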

\begin{proof}[Lemma \ref{lemma:split}]
This lemma makes sense because, in our definition, all of the successive nodes of $\beta$ will be split the same way when we split $\beta$.
\end{proof}

\begin{lemma}
\label{lemma:volume}
Given n-source knowledge tree $\mathcal{T}$, in which knowledge $\alpha$ has $n$ fragments $\alpha_1,...,\alpha_n$, we have $V_\alpha^{\mathcal{T}} = \sum_{i=1}^n V_{\alpha_i}$.
\end{lemma}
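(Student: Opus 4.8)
The plan is to prove the identity by induction on the nodes of the underlying DAG taken in reverse topological order (``leaves to roots'', as in Section~\ref{sec:knowledgetree}), but I would first strengthen the statement to a per-fragment version: for every node $\alpha$ and every fragment $\alpha_i$ of $\alpha$, $V_{\alpha_i} = w_{\alpha_i}\,V_\alpha^{\mathcal{T}}$, where $w_{\alpha_i}\in(0,1]$ is the \emph{weight} of the fragment, i.e.\ the fraction of $\alpha$ that $\alpha_i$ carries after the splitting procedure. Because splitting a node divides it (together with its sub-tree) into parts of equal size, one per incoming edge, the weights of the fragments of any node sum to $1$ --- this is itself a short induction using that every source has a single incoming edge once the super root is adjoined. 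Granting the strengthened claim, the lemma follows instantly: $\sum_i V_{\alpha_i} = V_\alpha^{\mathcal{T}}\sum_i w_{\alpha_i} = V_\alpha^{\mathcal{T}}$.

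For the base case, a sink $\alpha$ has $d_\alpha^{out}=0$, hence $V_\alpha^{\mathcal{T}}=0$, and each fragment is a lone fragment-node of fractional out-degree $0$, so $V_{\alpha_i}=0=w_{\alpha_i}V_\alpha^{\mathcal{T}}$. For the inductive step, fix $\alpha$ with children $\gamma$ (one per out-edge). A fragment $\alpha_i$ contributes its own fractional out-degree $w_{\alpha_i}d_\alpha^{out}$, plus, for each child $\gamma$, the sub-tree hanging below $\alpha_i$ along the edge $\alpha\to\gamma$; by Lemma~\ref{lemma:split} this sub-tree is a single scaled copy of the full $\gamma$ sub-tree, attached as one fragment $\gamma^{(i)}$ of $\gamma$ whose weight the splitting rule pins down as $w_{\gamma^{(i)}}=w_{\alpha_i}/d_\gamma^{in}$ (the edge $\alpha\to\gamma$ gets a $1/d_\gamma^{in}$ share of $\gamma$, then subdivided in proportion to the fragmentation of $\alpha$). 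Applying the inductive hypothesis to each $\gamma$ (which precedes $\alpha$ in reverse topological order),
\[
V_{\alpha_i} \;=\; w_{\alpha_i}d_\alpha^{out} + \sum_{\gamma} V_{\gamma^{(i)}}
\;=\; w_{\alpha_i}d_\alpha^{out} + \sum_{\gamma}\frac{w_{\alpha_i}}{d_\gamma^{in}}\,V_\gamma^{\mathcal{T}}
\;=\; w_{\alpha_i}\Bigl(d_\alpha^{out} + \sum_{\gamma}\frac{V_\gamma^{\mathcal{T}}}{d_\gamma^{in}}\Bigr)
\;=\; w_{\alpha_i}V_\alpha^{\mathcal{T}},
\]
the last equality being exactly the defining recursion for $V_\alpha^{\mathcal{T}}$. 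This closes the induction, and summing over $i$ gives the lemma.

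The step that needs the most care is the bookkeeping for $\gamma^{(i)}$: asserting that everything below a given fragment $\alpha_i$ looks like the whole $\gamma$ sub-tree rescaled by a \emph{single} factor, so that one may write $V_{\gamma^{(i)}}=w_{\gamma^{(i)}}V_\gamma^{\mathcal{T}}$ instead of tracking each descendant sub-fragment separately. That uniformity is precisely the content of Lemma~\ref{lemma:split}, so I would lean on it there rather than re-deriving it. The remaining facts --- termination of the splitting (no loops in the knowledge tree) and $\sum_i w_{\alpha_i}=1$ --- are routine and already observed in Section~\ref{sec:knowledgetree}. Finally, once the per-fragment identity $V_{\alpha_i}=w_{\alpha_i}V_\alpha^{\mathcal{T}}$ is available it also does the real work in Theorem~\ref{theorem:KQI}, since it converts the exponentially many fragment summands of \eqref{eq:KQI_fragment} into expressions in the polynomially computable quantities $V_\alpha^{\mathcal{T}}$.
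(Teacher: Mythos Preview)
Your proof is correct and follows the same leaf-to-root induction as the paper, driven by the recursive definition of $V_\alpha^{\mathcal{T}}$. The only real difference is that you strengthen the inductive hypothesis to the per-fragment identity $V_{\alpha_i}=w_{\alpha_i}V_\alpha^{\mathcal{T}}$ (with $\sum_i w_{\alpha_i}=1$ checked separately), whereas the paper argues directly at the aggregate level: it observes that each child $\alpha^+$ contributes $V_{\alpha^+}^{\mathcal{T}}/d_{\alpha^+}^{in}$ to the sum over the parent's fragments and that the fragment out-degrees total $d_\alpha^{out}$, matching the two summands in the recursion. Your formulation is tidier and, as you note, packages exactly what Theorem~\ref{theorem:KQI} needs; the paper instead rebuilds the per-group equalities there by invoking Lemma~\ref{lemma:split} and Lemma~\ref{lemma:volume} in tandem.
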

\begin{proof}[Lemma \ref{lemma:volume}]
We use inductive reasoning to prove it. (1) If knowledge $\alpha$ is a leaf node, $\alpha$ has no children and no fragments. Obviously, $V_\alpha^{\mathcal{T}} = d_\alpha^{out} = V_\alpha$. (2) Assuming that all children of $\alpha$ satisfy the lemma, next we explain that $\alpha$ also satisfies the lemma. Given $\alpha^+$ that is a child of $\alpha$, we say that each fragment $\alpha_i^+$ is added to $\alpha_i$ in the same proportion $\frac{1}{d_{\alpha_i^+}^{in}}$. So sum over all the $\alpha_i^+$ is the same as $\sum_{1 \le i \le d_\alpha^{out}} \frac{V_{\alpha_i^+}^{\mathcal{T}}}{d_{\alpha_i^+}^{in}}$. And the sum of the out-degree of all the fragments of $\alpha$ is also equal to $d_\alpha^{out}$. Thus, the lemma is proved.
\end{proof}

Now let us prove Theorem \ref{theorem:KQI}.

\begin{proof}[Theorem \ref{theorem:KQI}]
Divide all the fragments of $\alpha$ into $d_\alpha^{in}$ groups, denoted as $S_{\alpha_i^-} (1 \le i \le d_\alpha^{in})$, according to $\alpha$'s parent. The fragment in group $S_{\alpha_i^-}$ is denoted as $\alpha_{(i,j)} (1 \le j \le |S_{\alpha_i^-}|)$. Applying lemma 1, in the same group $S_{\alpha_i^-}$, the proportion $\frac{V_{\alpha_{(i,j)}}}{V_{\alpha_{(i,j)}^-}}$ stays the same. Applying lemma 2, $\frac{V_{\alpha_{(i,j)}}}{V_{\alpha_{(i,j)}^-}} = \frac{\sum_{j=1}^{|S_{\alpha_i^-}|} V_{\alpha_{(i,j)}}}{\sum_{j=1}^{|S_{\alpha_i^-}|} V_{\alpha_{(i,j)}^-}} = \frac{V_{\alpha_i}^{\mathcal{T}}}{V_{\alpha_i^-}^{\mathcal{T}}}$. So we can sum over $\mathcal{K}_{\alpha_{(i,j)}}^{\mathcal{T}}$ in each of these groups separately.
\begin{equation*}
\begin{split}
    \sum_{1 \le j \le |S_{\alpha_i^-}|}\mathcal{K}_{\alpha_{(i,j)}}^{\mathcal{T}} &= \sum_{1 \le j \le |S_{\alpha_i^-}|} -\frac{V_{\alpha_{(i,j)}}}{W} \log_2\frac{V_{\alpha_{(i,j)}}}{V_{\alpha_{(i,j)}^-}}\\
    &= \sum_{1 \le j \le |S_{\alpha_i^-}|} -\frac{V_{\alpha_{(i,j)}}}{W} \log_2\frac{V_{\alpha_i}^{\mathcal{T}}}{V_{\alpha_i^-}^{\mathcal{T}}}\\
    &= -\frac{V_{\alpha_i}^{\mathcal{T}}}{W} \log_2\frac{V_{\alpha_i}^{\mathcal{T}}}{V_{\alpha_i^-}^{\mathcal{T}}}\\
    &= -\frac{V_\alpha^{\mathcal{T}}}{d_\alpha^{in}W} \log_2\frac{V_\alpha^{\mathcal{T}}}{d_\alpha^{in}V_{\alpha_i^-}^{\mathcal{T}}}
\end{split}
\end{equation*}

Then when we add up all the groups, we find that the result is the same as Eq. \eqref{eq:KQI_fragment}.
\end{proof}

\section{KQI Algorithm}

In this part, we give the algorithm of KQI, whose input is a directed acyclic graph. The algorithm consists of two stages: preparation and query. In the preparation phase, lines 1 to 3, go through all the knowledge in reversed topological order, calculate their volumes and the weight of the graph. In the query phase, lines 5 to 6, we apply Eq. \eqref{eq:KQI} in the general knowledge tree to calculate the KQI.

Obviously, the complexity of the preparation phase is $O(n)$, and the complexity of the query phase is $O(1)$.

\begin{algorithm}[htb]
\caption{KQI}
\SetAlgoLined
\DontPrintSemicolon

\SetKwInput{KwInput}{Input}
\SetKwInput{KwOutput}{Output}
\SetKwFunction{FKQI}{KQI}

    \KwInput{Directed Acyclic Graph $(V, E)$}
    \KwOutput{KQI of each node $v$}

    \For{$v \in$ TopologicalSort($V$)}{
        $W$ = $W$ + $d_v^{out}$\;
        $vol[v]$ = $d_v^{out}$ + $\sum_{v \rightarrow u} \frac{vol[u]}{d_{u}^{in}}$\;
    }
    \;
    \SetKwProg{Fn}{Function}{:}{}
    \Fn{\FKQI{$v$}}{
        \KwRet $\sum_{u \rightarrow v} -\frac{vol[v] / d_v^{in}}{W} \log_2 \frac{vol[v] / d_v^{in}}{vol[u]}$\;
    }
    
    
\end{algorithm}


\chapter{Findings on KQI}
\label{Chapter:rule}
In this chapter, we retrieved and integrated known academic database including but not limited to Nature, Science, Elsevier and Springer: more than 185 million literatures published between 1970 and 2020, and over 1 billion citations among them. The literature covers 292 domains in 19 disciplines: Economics, Environmental science, Biology, Computer science, Sociology, Physics, Materials science, Chemistry, Geology, Geography, Medicine, Business, Engineering, History, Political science, Art, Philosophy, Mathematic and Psychology. Detailed statistical indicators for the dataset are listed in Table \ref{tab:dataset}.

\begin{table}[!htpb]
  \caption{Statistical properties of dataset.}
  \label{tab:dataset}
  \centering
  \begin{threeparttable}[b]
     \begin{tabular}{lcrr}
        \hline
            \textbf{First-level disciplines} & \textbf{Subdisciplines} & \textbf{Literatures} & \textbf{Citations}\\
        \hline
            Art & 6 & 8,431,187 & 1,181,039\\
            Biology & 32 & 18,272,052 & 250,601,059\\
            Business & 13 & 5,307,361 & 4,746,419\\
            Chemistry & 21 & 17,558,380 & 139,346,592\\
            Computer science & 34 & 13,647,023 & 60,625,629\\
            Economics & 40 & 4,530,937 & 26,995,163\\
            Engineering & 44 & 11,678,544 & 15,157,197\\
            Environmental science & 8 & 3,301,952 & 1,541,758\\
            Geography & 11 & 5,498,603 & 1,790,338\\
            Geology & 18 & 4,248,257 & 29,095,511\\
            History & 7 & 8,018,243 & 2,875,534\\
            Materials science & 7 & 10,100,461 & 33,002,075\\
            Mathematics & 20 & 7,928,338 & 49,616,839\\
            Medicine & 45 & 31,965,814 & 219,969,265\\
            Philosophy & 7 & 4,547,424 & 1,609,383\\
            Physics & 27 & 11,447,568 & 63,977,908\\
            Political science & 3 & 8,106,899 & 5,085,742\\
            Psychology & 14 & 9,561,040 & 66,966,763\\
            Sociology & 13 & 6,688,744 & 11,632,120\\
        \hline
            \textbf{TOTLE: 19} & \textbf{292\tnote{*}} & \textbf{185,804,569\tnote{*}} & \textbf{1,512,707,248\tnote{*}}\\
        \hline
    \end{tabular}
    \begin{tablenotes}
        \item [*] 1. Among 292 subdisciplines, some belong to multiple first-level disciplines at the same time. \\ 2. For one discipline, we only count the citations with both subject and object in that discipline.
    \end{tablenotes}
  \end{threeparttable}
\end{table}

In such a large academic dataset, through a large number of experiments, we found the following rules and laws: linear growth law, knowledge boom threshold, and 80/20 rule. We will introduce them in turn next.

\section{Linear Growth Law}

Linear growth law, as the name implies, means that for most disciplines, the total KQI of the whole academic citation network increases almost linearly over time, although the number of papers shows a polynomial growth.

In order to systematically draw the universal relationship between scientific productivity and knowledge amount, We divide the academic data into subdisciplines of varying sizes and build snapshots of the citation networks from year to year. The KQI proposed previously is used to measure the knowledge of each network, and more lateral insights based on this are shown below.

\subsection{Phenomenon}

Buckminster Fuller coined the "knowledge doubling curve", which held that human knowledge would increase at an alarming rate.\cite{knowledgedoubling} However, we find that later terminology is more often referred to as an information explosion rather than knowledge explosion, which may be a misunderstanding (Fig. \ref{fig:Fig_2a}). In 2008, Krishna and Abhay hinted at the potential of the web to balance the information explosion with knowledge acquisition.\cite{webinfoknowledge} Our experiments confirm that knowledge normally increases linearly over time compared to the number of literature in most academic disciplines, large or small, between 1970 and 2020. The magnitude of the increase in knowledge is wildly out of step with scientific productivity, and the trend continues unabated over time (Fig. \ref{fig:linear}a, Fig. \ref{fig:linear}b). We further theoretically deduced under the BA model\cite{BAmodel} that the polynomial increase in literature brings a linear increase in knowledge. If $n$ represents the number of literature, $m$ represents the average number of references, $t$ represents time, and $k$, $b$ are parameters, we have 
\begin{equation}
\label{eq:linear}
    \sqrt[n]{m} \sim kt+b.
\end{equation}

\begin{figure}[!htp]
  \centering
  \includegraphics[width=5cm]{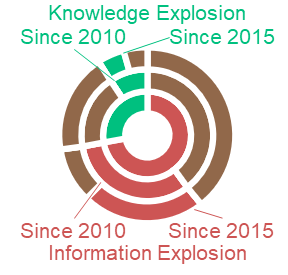} \\
  \caption[Word frequency of \textit{knowledge explosion} and \textit{information explosion}.]{Word frequency statistics of \textit{knowledge explosion} and \textit{information explosion} in Google Scholar. According to statistics in Google Scholar, the word frequency of "information explosion" is more than twice that of "knowledge explosion", and this gap has more than tripled since 2010 and more than quadrupled since 2015.}
 \label{fig:Fig_2a}
\end{figure}

\begin{figure}[!hbtp]
    \centering
    \subcaptionbox{Polynomial literatures.}[4.5cm]{
        \includegraphics[height=4cm]{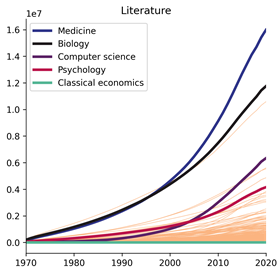}
    }
    \hspace{0.2cm}
    \subcaptionbox{Linear knowledge.}[4.5cm]{
        \includegraphics[height=4cm]{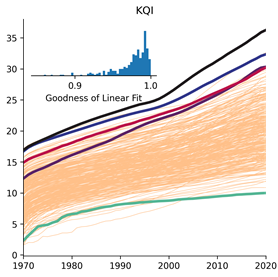}
    }
    \hspace{0.2cm}
    \subcaptionbox{KQI under attenuation.}[4.5cm]{
        \includegraphics[height=4cm]{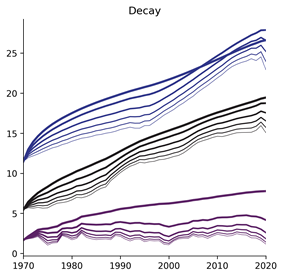}
    }
    \caption[Polynomial literature brings linear knowledge.]{Polynomial literature brings linear knowledge. \textbf{a-b,} Although the number of literature in various 311 disciplines show polynomial growths, the linear goodness of fit of KQI in 86.8\% of them reached 0.95. The five disciplines marked in bold reflect that: the number of literature is related to KQI, but not absolutely. Medicine has the highest number of literature, but biology has the highest KQI, followed by computer science and psychology, and classical economics is the lowest. \textbf{c,} Assign each edge in the citation network with weight $e^{-\lambda(t-t_0)}$, where $t$ is the current time and $t_0$ is the publication time. The lines from thick to thin represent the cases where $\lambda$ is 0, 0.2, 0.4, 0.6, 0.8, and 1, respectively. As the degree of aging increases, the growth becomes shaky but remains linear.}
    \label{fig:linear}
\end{figure}

Many studies about citations take the effect of decay into consideration, with the aging of academic literature as a consensus gradually.\cite{ageing1,ageing2} By using one general attenuation strategies $w(t)=e^{-\lambda(t-t_0)}w_0$, we find that when setting different attenuation coefficients $\lambda$, the same result could be observed. As the attenuation coefficient increases, the growth line becomes shaky, because the recent changes in the graph are more obvious when the accumulation of the earlier years is eliminated (Fig. \ref{fig:linear}c).

Different academic network structures in different disciplines also lead to obvious differences in knowledge amount. The knowledge rankings of the disciplines do not correspond to their scientific productivity rankings (Fig. \ref{fig:Fig_2l}). The results in Fig. \ref{fig:linear}a and Fig. \ref{fig:linear}b show that the best-developed discipline in the last 50 years is biology, though the discipline with the largest number of papers is medicine. Another point of concern is that computer science has developed rapidly in recent years, and we can see that both the number of papers and KQI gradually surpass many other disciplines. Something counter-intuitive is that psychology that doesn't have a lot of papers develops very well in terms of knowledge.

\begin{figure}[!htp]
  \centering
  \includegraphics[width=10cm]{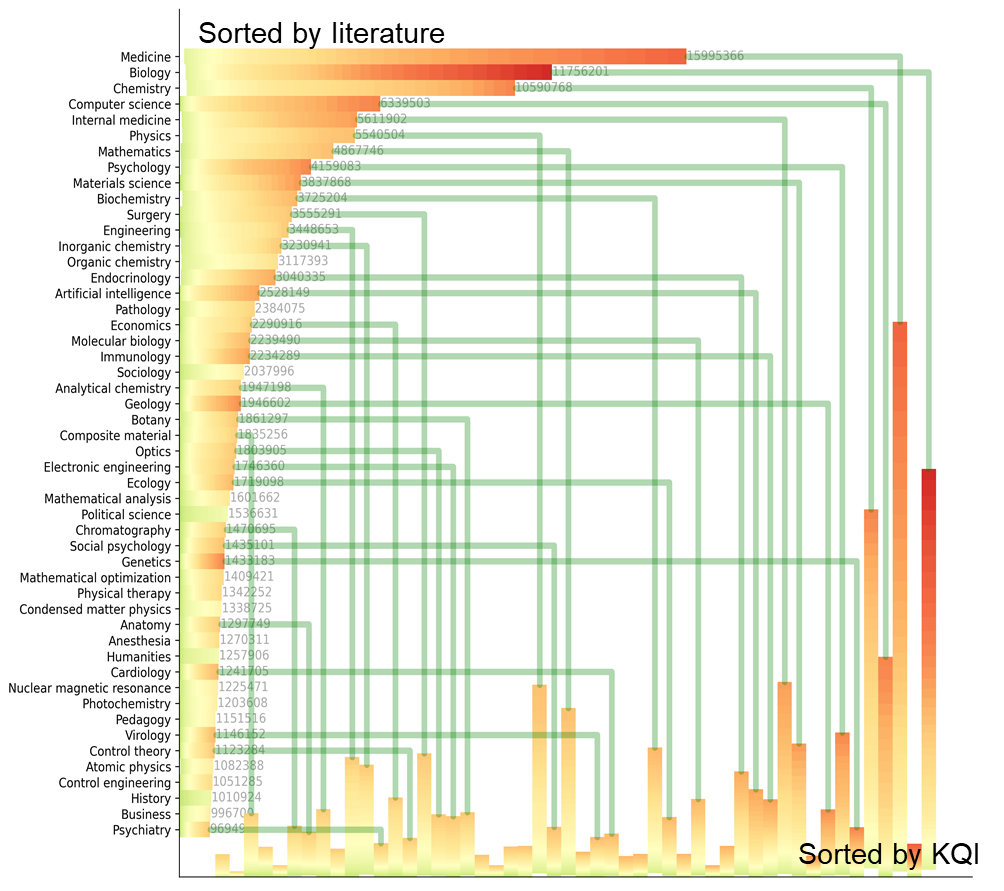} \\
  \caption[Ranking comparison between literatures and KQI.]{Ranking comparison between the number of literatures and KQI. Sorting each discipline according to the number of literatures and KQI, there is a certain correlation between them.}
 \label{fig:Fig_2l}
\end{figure}

\subsection{Proof}

In this part, we explain the law of such order of magnitude difference between literature and KQI. Considering that the paper citation network is a typical scale-free network, which was mentioned by Derek de Solla Price\cite{citation1} in 1965, we attempt to simulate citation network with some scale-free network models. Coincidentally, the BA model\cite{BAmodel} nicely represents the growth of the citations network, with each newly arrived node connected to some old nodes, just as each new paper will cite some previous papers. Therefore, we will prove it under the assumption of the BA model.

\subsubsection{Assumption and Preparation}

Assuming that the degree of the network follows the power law distribution, we discuss the characteristics of KQI under the classical BA model. In BA model, the degree of the $i$th node is denoted as $d_i^n$ when the $n+1$th new node is about to arrive, and the probability of the $i$th node to connect with the $n+1$th new node is considered as $p_i^n = \frac{d_i^n}{\sum_{j=1}^n d_j^n}$.\cite{BAmodel}

Considering directed graph and BA model in continuous scenarios, assume that nodes arrive at a speed of $s(t)$, and $s(t)\cdot\mathrm{d}t$ represents the number of nodes arriving at time $t$, with each new node connected to $m$ nodes. The probability density of the node arriving at $t_i$ to connect with the node arriving at time $t$ can be written as
\begin{equation}
    \label{eq: p_i}
    p_i(t) = \frac{md_i(t)}{W(t)},
\end{equation}
where $W(t) = (m+1)\int_0^t s(t_j) \mathrm{d}t_j$. $m$ in the numerator means connecting with other $m$ nodes, $m+1$ comes from the assumption that each node includes a self-ring to ensure the model to work properly. So the initial in-degree of each node is $d_i(t_i) = 1$. An obvious conclusion can be also concluded.
$$\int_0^t \frac{p_i(t)}{m} \mathrm{d}t_i = 1.$$

\subsubsection{Degree $d_i$ of the $i$th Node}
In the period of time $\mathrm{d}t$, the increment of degree can be expressed as
$$\mathrm{d}d_i(t) = p_i(t) \cdot s(t)\mathrm{d}t,$$
where we approximately consider that all $s(t)\mathrm{d}t$ nodes are attached to the node $\alpha_i$ with the same probability $p_i$.

Apply Eq. \eqref{eq: p_i},
\begin{equation*}
\begin{split}
    \frac{\mathrm{d}d_i(t)}{d_i(t)} &= m \frac{s(t)\mathrm{d}t}{W(t)} = \frac{m}{m+1} \frac{\mathrm{d}W(t)}{W(t)}\\
    \ln d_i(t) &= \frac{m}{m+1} \ln W(t) + C\\
    d_i(t) &= C' [W(t)]^{\frac{m}{m+1}}
\end{split}
\end{equation*}

As $d_i(t_i) = 1$, we can get
\begin{equation}
    d_i(t) = \left[\frac{W(t)}{W(t_i)}\right]^{\frac{m}{m+1}} = r_i^{\frac{m}{m+1}}(t)
\end{equation}
where we use $r_i(t)$ to represent $\frac{W(t)}{W(t_i)}$ for convenience.

\subsubsection{Node Containing Proportion $P_{i \prec j}$}
Define the proportion that the $j$th node ultimately belongs to the successor node of the $i$th node as $P_{i \prec j}$. Assuming that the connections between any two nodes are independent of each other, and we have
$$P_{i \prec j} = \int_{t_i}^{t_j} \frac{1}{m} p_k(t_j) P_{i \prec k} \cdot s(t_k)\mathrm{d}t_k.$$
Take the derivative of both sides of this equation with respect to $t_j$,
$$\frac{\mathrm{d}P_{i \prec j}}{\mathrm{d}t_j} = \frac{1}{m} p_j(t_j) P_{i \prec j} \cdot s(t_j) + \int_{t_i}^{t_j} \frac{\mathrm{d}p_k(t_j)}{\mathrm{d}t_j} \frac{1}{m} P_{i \prec k} \cdot s(t_k)\mathrm{d}t_k.$$
On account of $\mathrm{d}p_k(t_j) \approx 0 \cdot \mathrm{d}t_j$ when $t_i$ and $t_j$ are big enough, the second term can be ignored. Thus,
$$\mathrm{d}P_{i \prec j} = \frac{1}{m} p_j(t_j) P_{i \prec j} \cdot s(t_j)\mathrm{d}t_j.$$

Applying Eq. \eqref{eq: p_i},
\begin{equation*}
\begin{split}
    \frac{\mathrm{d}P_{i \prec j}}{P_{i \prec j}} &= \frac{d_j(t_j)}{W(t_j)}  \cdot s(t_j)\mathrm{d}t_j = \frac{\mathrm{d}W(t_j)}{(m+1)W(t_j)},\\
    \ln P_{i \prec j} &= \frac{1}{m+1} \ln W(t_j) + C,\\
    P_{i \prec j} &= C'[W(t_j)]^{\frac{1}{m+1}}.
\end{split}
\end{equation*}

As $P_{i \prec i} = p_i(t_i) = \frac{m}{W(t_i)}$, we can get
\begin{equation*}
    P_{i \prec j} = \frac{m}{W(t_i)}\left[\frac{W(t_j)}{W(t_i)}\right]^{\frac{1}{m+1}} = \frac{m}{W(t_i)}r_i^{\frac{1}{m+1}}(t_j).
\end{equation*}

To ensure the correctness of this approximate formula, the least we can do is make sure that $0 < P_{i \prec j} < 1$. We conclude that this formula works only if $W(t_j) < \frac{[W(t_i)]^{m+2}}{m^{m+1}}$, which is easy enough to satisfy in most cases. 
Thus,
\begin{equation}
    P_{i \prec j} = \frac{m}{W(t_i)}r_i^{\frac{1}{m+1}}(t_j), \quad when \quad W(t_j) < \frac{[W(t_i)]^{m+2}}{m^{m+1}}.
\end{equation}

\subsubsection{Volume $V_{\alpha_i}$ of Community $\alpha_i$}
If $\alpha_i$ is the community represented by the $i$th node, the volume of $\alpha_i$ is denoted as $V_{\alpha_i}$. When $W(t) \le \frac{[W(t_i)]^{m+2}}{m^{m+1}}$, we have
\begin{equation*}
\begin{split}
    V_{\alpha_i} &= \int_{t_i}^{t} \int_{t_i}^{t_j} p_k(t_j) P_{i \prec k} \cdot s(t_k)\mathrm{d}t_k \cdot s(t_j) \mathrm{d}t_j\\
    &= \int_{t_i}^{t} m P_{i \prec j} \cdot s(t_j) \mathrm{d}t_j\\
    &= \int_{t_i}^{t} \frac{m}{m+1} \frac{m}{W(t_i)}r_i^{\frac{1}{m+1}}(t_j) \mathrm{d}W(t_j)\\
    &= \frac{m^2}{m+2} \left[r_i^{\frac{m+2}{m+1}} - 1\right],\\
\end{split}
\end{equation*}

\begin{equation*}
\begin{split}
    V_{\alpha_i^-} &= \frac{\int_{t_1}^{t_i} V_{\alpha_j} \frac{p_j(t_i)}{m} \cdot s(t_j) \mathrm{d}t_j}{\int_{t_1}^{t_i} \frac{p_j(t_i)}{m} \cdot s(t_j) \mathrm{d}t_j}\\
    &= \frac{\int_{r_1}^{r_i} \frac{m^2}{(m+1)(m+2)}r_i^{\frac{1}{m+1}} \left[r_j^{-\frac{m+2}{m+1}}-1\right]\mathrm{d}r_j}{\int_{r_1}^{r_i} -\frac{1}{m+1}r_i^{\frac{1}{m+1}} r_j^{-\frac{m+2}{m+1}} \mathrm{d}r_j}\\
    &= \frac{m^2}{m+2}\left[\frac{r_1-r_i}{(m+1)(r_i^{-\frac{1}{m+1}}-r_1^{-\frac{1}{m+1}})} - 1\right].\\
\end{split}
\end{equation*}

\subsubsection{Formula Estimation of KQI}

The expected KQI of nodes $\alpha_i$ that arrives at time $t_i$ can be expressed as
$$K_{\alpha_i} = -\frac{V_{\alpha_i}}{W}\log_2 \frac{V_{\alpha_i}}{mV_{\alpha_i^-}},$$
among which
$$\frac{V_{\alpha_i}}{mV_{\alpha_i^-}} = \frac{1}{m} \cdot \frac{r_i^{\frac{m+2}{m+1}} - 1}{\frac{r_1-r_i}{(m+1)(r_i^{-\frac{1}{m+1}}-r_1^{-\frac{1}{m+1}})} - 1}.$$

In practice, $m$ is usually a relatively large value. For the reference network of papers, the average value of $m$ is around 10. And we consider the case where $t_i >> t_1$ and $t >> t_i$. So we make some estimates of the above formulas to reduce the computational complexity.

\begin{equation*}
\begin{split}
    \frac{V_{\alpha_i}}{mV_{\alpha_i^-}} &\approx \frac{r_i^{\frac{m+2}{m+1}} - 1}{\frac{r_1-r_i}{r_i^{-\frac{1}{m+1}}-r_1^{-\frac{1}{m+1}}} - m}
    \approx \frac{r_i^{\frac{m+2}{m+1}} - 1}{r_1r_i^{\frac{1}{m+1}} - m}
    \approx \frac{r_i}{r_1}
    \approx \frac{m}{W(t_i)}.
\end{split}
\end{equation*}

This expression says that for any particular node $\alpha_i$, $\log_2 \frac{V_{\alpha_i}}{mV_{\alpha_i^-}}$ is independent of time $t$. The $\log_2 \frac{V_{\alpha_i}}{mV_{\alpha_i^-}}$ converges gradually with the arrival of new nodes, and can be considered a constant value when $W(t_i)$ is large enough.

Thus, in our theoretical analysis, for convenience, we use an approximate simplified formula for KQI:
$$K_{\alpha_i} = \frac{V_{\alpha_i}}{W}.$$

\subsubsection{Evolution Characteristics of KQI}

Looking at the sum of KQI for all the nodes in the entire network, we have

\begin{equation*}
\begin{split}
    K &= \int_{t_1}^{t} K_{\alpha_i} \cdot s(t_i) \mathrm{d}t_i = \int_{t_1}^{t} \frac{V_{\alpha_i}}{W} \cdot s(t_i) \mathrm{d}t_i\\
    &= \int_{t_1}^{t} \frac{m^2}{(m+1)(m+2)} \left[\frac{W^\frac{1}{m+1}(t)}{W^\frac{m+2}{m+1}(t_i)} - \frac{1}{W(t)}\right] \mathrm{d}W(t_i)\\
    &\approx \int_{t_1}^{t} \left[\frac{W^\frac{1}{m+1}(t)}{W^\frac{m+2}{m+1}(t_i)} - \frac{1}{W(t)}\right] \mathrm{d}W(t_i)\\
    &= \left[-\frac{1}{m+1}\left(1-\frac{W^\frac{1}{m+1}(t)}{W^\frac{1}{m+1}(t_1)}\right) - 1 + \frac{W(t_1)}{W(t)}\right]\\
    &\sim \frac{W^\frac{1}{m+1}(t)}{m}.
\end{split}
\end{equation*}

Then we discuss the relationship between the KQI of the whole network and the growth rate of network nodes based on this. 

First we define the standard growth function of the network.
\begin{definition}[Standard Network Growth] A standard growth network with size $W(t)$, should satisfies the equation
$$\frac{W^\frac{1}{m+1}(t)}{m} = kt+b,$$
where $k, b$ are parameters and $m$ is the average degree of the network.
\end{definition}

This equation is the same as Eq. \ref{eq:linear}. For networks satisfying this equation, the KQI of the whole network increases linearly. The experiments on different disciplines show that the growth law of the number of nodes is basically in line with the standard network growth (Fig. \ref{fig:linear}b).

In addition to standard network growth, there are also accelerated network growth and decelerated network growth. When $t \rightarrow \infty$, if $\frac{\mathrm{d^2}K}{\mathrm{d}t^2} > 0$, then we say the corresponding function $s(t)$ is accelerated growth speed. For example, $s(t) = e^{t}, s(t) = t^{m+2}$, etc. When $t \rightarrow \infty$, if $\frac{\mathrm{d^2}K}{\mathrm{d}t^2} < 0$, then we say the corresponding function $s(t)$ is decelerated growth speed. For example, $s(t) = \frac{1}{t}, s(t) = s_0$, etc. Specifically, considering science topics, from 2015 to 2020, deep learning is consistent with accelerated network growth, while channel capacity is consistent with decelerated network growth.

\section{Knowledge Boom Threshold}

Knowledge boom refers to the fact that KQI in certain disciplines suddenly inflects into a state of faster growth, although most of the time it increases linearly. In this section, we attempt to explore and reveal the knowledge boom threshold.

\subsection{Phenomenon}

The researchers said the 21st century is the century of life sciences.\cite{biology21st} One interesting point is that KQI in biology and medicine also entered a period of accelerated growth from then on (Fig. \ref{fig:boom}). In addition, we find somewhat nonlinear changes in chemistry, materials, engineering, and other disciplines. But surprisingly, there's nothing unusual about the number of papers in these areas at that time. 

To explain this phenomenon, we find that a cascade of failures in the network exists at a critical point similar to the percolation threshold.\cite{cascade} Inspired by this, we get a knowledge boom threshold 
\begin{equation}
    m \sim a \log{n+1},
\end{equation}
where $m$ is the average degree, $a$ is the required number of active neighbors for each node to be active and $n$ is graph size. The threshold means that stable connections between knowledge are established for flourish. More specifically, we define the two states of knowledge as: \textbf{inactive} and \textbf{active}. This threshold indicates that starting with some \textbf{active} knowledge, the common heuristic of a \textbf{active} knowledge can induce an \textbf{inactive} knowledge, and eventually almost all knowledge can be activated. The detailed proof process will be introduced in Section \ref{sec:boomproof}.

In the experiments of 311 (292 subdisciplines and 19 first-level disciplines) disciplines with different scales, we find that 40 disciplines have experienced a knowledge boom. After calculating the value $a$ of these disciplines, we find that different disciplines have different thresholds $a$, indicating that the difficulty to follow through from one paper to another is not the same. Interestingly, the disciplines of developing works (neuroscience, genetics, cognitive psychology, etc.) do not usually reach the threshold, whereas the disciplines of disruptive works (engineering, internet privacy, nanotechnology, etc.) do the opposite (Fig. \ref{fig:afield}). One reasonable explanation is that developing works are hard to comprehensively understand and have a higher $a$. For each discipline, as the scientific productivity increases, the knowledge amount has a phase transition when it reaches this threshold.

\begin{figure}[!htp]
  \centering
  \includegraphics[width=14.8cm]{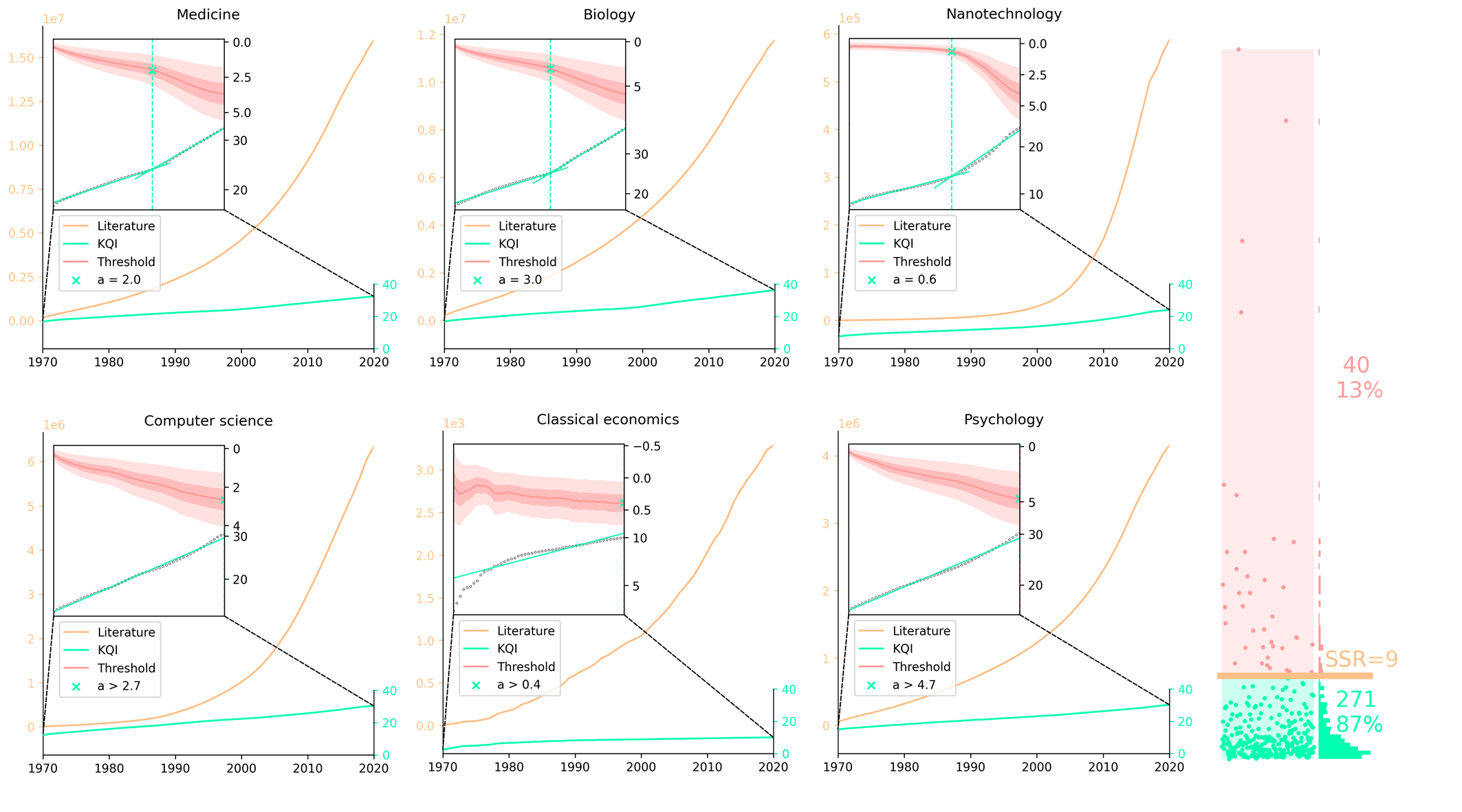} \\
  \caption[Growth of KQI.]{Growth of KQI. The growth rate of KQI was significantly different from that of the number of literature. Medicine, Biology, Nanotechnology are disciplines where knowledge booms have occurred, while Computer science, Classical economics, and Psychology are disciplines where knowledge booms have not yet begun. The threshold is calculated by formula $\frac{m-1}{\log{n}}$ (red line, i.e. $a$), $\frac{(1\pm20\%)m-1}{\log{n}}$ and $\frac{(1\pm50\%)m-1}{\log{n}}$ (shades of red), $a$ indicates the threshold of a discipline to accelerate the growth of knowledge. The KQI curve (cyan line) is fitted from real data points (black hollow circle) for each year. Among 311 (292 subdisciplines and 19 first-level disciplines) disciplines, about 13\% have boomed, with residual sum of squares (RSS=9) of linear fitting as the critical point.}
 \label{fig:boom}
\end{figure}

\begin{figure}[!htp]
  \centering
  \includegraphics[width=14.8cm]{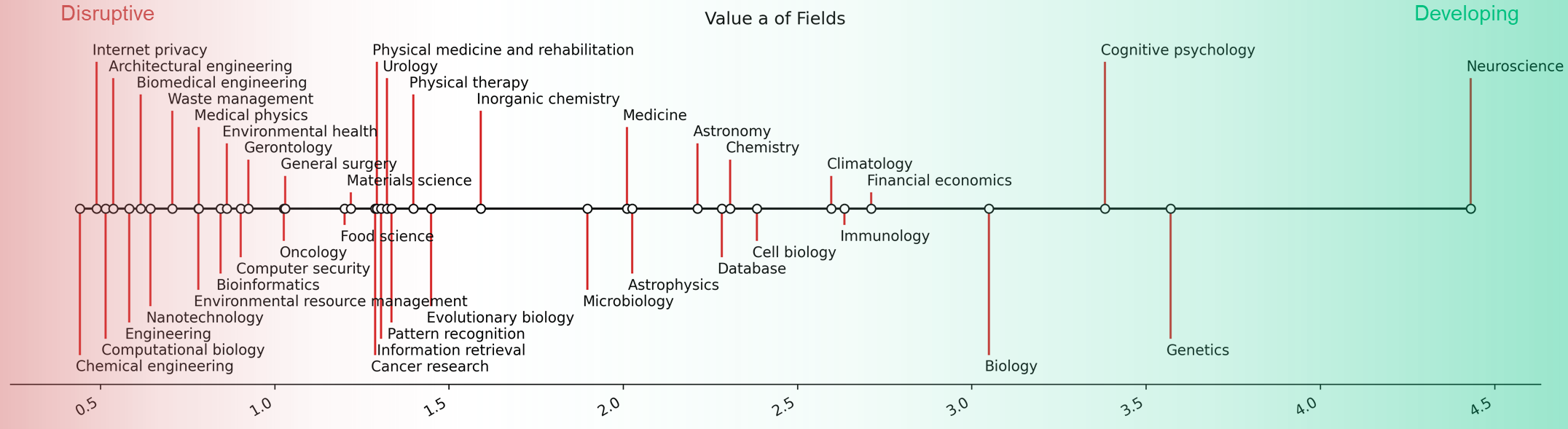} \\
  \caption[Knowledge boom threshold of different disciplines.]{Knowledge boom threshold of different disciplines. The disciplines of developing works do not usually reach the threshold (higher $a$), whereas the disciplines of disruptive works do the opposite (lower $a$).}
 \label{fig:afield}
\end{figure}

\subsection{Proof}
\label{sec:boomproof}
Intuitively, we think that a knowledge network to flourish requires a stable network of connections between all knowledge. In order to achieve this goal, the knowledge should be able to flow well in the network and inspire other knowledge. We think there are two states of knowledge in a network. One state is \textbf{active}, that is, the knowledge has been fully recognized. The other state is \textbf{inactive}, where the knowledge has not been fully recognized. An \textbf{active} knowledge can drive \textbf{inactive} knowledge neighbors, and \textbf{inactive} knowledge becomes active when it has $a$ neighbor of \textbf{active} knowledge.

Specifically, knowledge boom is identified as the following processes: (1) Randomly select some knowledge to be active. (2) Repeatedly search for inactive knowledge that has more than $a$ active neighbors, until no inactive knowledge can be activated. This process is similar to the learning process when we first enter a discipline. Starting from some known knowledge, we continue to expand our horizons and learn as much new knowledge as possible. 

Considering a network with an average degree $m$, one \textbf{active} knowledge will cause an average of $\frac{m-1}{a}$ \textbf{inactive} knowledge to be active, according to the conclusion of percolation process. Here, we use $m-1$ because the edge that causes the knowledge to be active should be excluded.

According to the relevant research on connectivity threshold\cite{connectivitythreshold}, starting from any knowledge in the network, the whole network becomes active when $$\frac{m-1}{a} > \log n.$$

That is $$m > a\log n + 1.$$

The proof is completed.

\section{80/20 Rule of KQI}
\label{sec:8020rule}
The Pareto rule, or 80/20 rule, used to be famous in economics for the fact that the top 20 percent of the population owned around 80 percent of the wealth. Exploring further, we find a similar 80/20 rule in knowledge, that is, 13.92\% of scientific productivity occupy 86.08\% of knowledge, and 86.08\% of scientific productivity occupy 13.92\% of knowledge (Fig. \ref{fig:80_20}). Unlike the contradictions between the rich and the poor, KQI reflects that the extraordinary papers are set off by tremendous ordinary papers. Without a large number of these ordinary papers, the extraordinary papers would not emerge. This phenomenon is pretty exciting, because we may only need to study a few papers, can obtain the vast majority of the knowledge of the entire academic network. 

\begin{figure}[!htp]
  \centering
  \includegraphics[width=7cm]{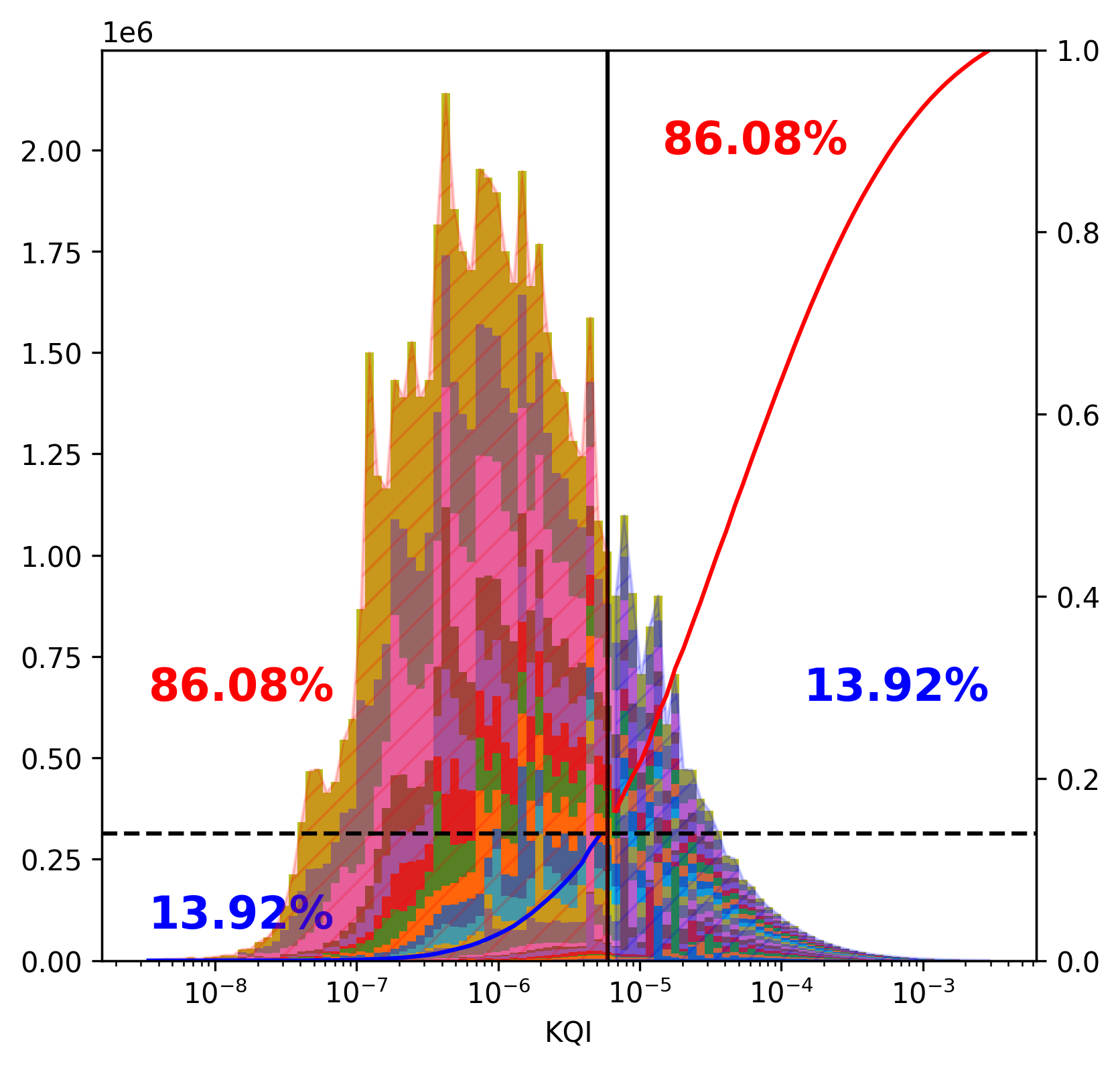} \\
  \caption[80/20 rule in knowledge.]{80/20 rule in knowledge. Stack distribution of KQI for all disciplines and cumulative distribution function are shown. 86.08\% of scientific productivity occupy 13.92\% of knowledge, and 13.92\% of scientific productivity occupy 86.08\% of knowledge.}
 \label{fig:80_20}
\end{figure}

It is important to note that while the 80/20 rule states that a small amount of literature represents the vast majority of knowledge, we should not ignore the value of those 80\% of literature. Low KQI does not mean that the literature is worthless, just that the literature can be summarized by that with high KQI. Literature with low KQI doesn't exist as indispensable nodes in the academic citation network but don't forget that the whole structure depends on the contribution of those 80\% of literature. The power of the masses is great. After removing a portion of literature with low KQI, we can find that some literature with high KQI is also fallen from the list.


\chapter{Applications of KQI}
\label{Chapter:application}

It can be seen from the previous chapter that KQI reveals some characteristics of the evolution and distribution of knowledge in academic citation networks. Beyond that, it has many further uses. In this chapter, we will give some practical applications of KQI, including extraction of knowledge veins, and rankings based on KQI.

\section{Knowledge Veins Extraction from Academic Networks}

To reveal the development of academic literature, we try to extract the key structure of disciplines. We propose the knowledge vein, just like the veins of leaves or blood vessels, and we hope that it can use the fewest papers to restore the most knowledge that reflects the development of an academic discipline through KQI. 
\begin{definition}[Knowledge Vein]
Knowledge vein defines as a compression of the academic citation network, which has the following characteristics: 
    \begin{enumerate}
        \item The knowledge vein is much smaller than the size of the academic citation network.
        \item Nodes in the knowledge vein are representatives (e.g. highest KQI) of the academic citation network.
        \item The knowledge vein completely preserves the inheritance relationship of representatives, that is, the nodes in the knowledge vein that can be connected must be connected in the academic citation network, and vice versa.
    \end{enumerate}
\end{definition}

This idea is inspired by the 80/20 rule mentioned in Section \ref{sec:8020rule}, since a small number of papers can represent the majority of knowledge, which means that we don't have to read all the papers.

After getting a high KQI literature list, we compress the original academic citation network. The list of literature with a high KQI corresponds to a list of nodes in the graph. For any two nodes, if there is a path between them, and no other high KQI nodes are passed along any paths, the two nodes are connected by an edge. The specific method of obtaining knowledge veins is shown in Algorithm \ref{algorithm:vein}.

\begin{algorithm}[htb]
\caption{Knowledge veins extraction.}
\label{algorithm:vein}
\SetAlgoLined
\DontPrintSemicolon

\SetKwInput{KwInput}{Input}
\SetKwInput{KwOutput}{Output}

    \KwInput{original directed acyclic graph $G$}
    \KwOutput{knowledge veins $G'$}
    
    Initializes a set of nodes with high KQI to nodelist.\;
    \For{node $\in$ nodelist}{
        $G'$.addnode(node)\;
        maxdepth := 1\;
        \While{$G'$.indegree(node) == 0}{
            openlist := [(v, 1) for v in $G$.predecessors(node)]\;
            accesslist := [v for v in $G$.predecessors(node)]\;
            \While{openlist is not empty}{
                v, depth := openlist.pop()\;
                \If{v $\in$ nodelist}{
                    $G'$.addedge(v, node)\;
                    continue\;
                }
                \If{depth < maxdepth}{
                    \For{s $\in$ $G$.predecessors(v)}{
                        \If{s $\notin$ accesslist}{
                            openlist.append((s, depth+1))\;
                            accesslist.add(s)\;
                        }
                    }
                }
            }
            maxdepth := maxdepth+1\;
            \If{maxdepth > MAXDEPTH}{break}
        }
    }
\end{algorithm}

Considering the inheritance structure of knowledge, in order to better visualize the literature, we arrange it according to the citation relations (root on top, leaf on bottom). Taking data mining as an example, we select the papers with the highest ranking in KQI to generate the veins that are preserved to varying degrees. Visible, the quality of the selected articles is guaranteed, and they are scattered around  (Fig. \ref{fig:datamining}). It shows that the papers with the highest rankings in KQI not only cover the majority of knowledge but also represent the development of the discipline well. We also do experiments in many other different disciplines, like business, astronomy, mathematical analysis, nanotechnology, cancer research, and so on, and we get the same results (Fig. \ref{fig:veins}). In this way, any new literature can be quickly associated with the existing knowledge system. In addition, the knowledge veins will change with the development of the discipline and the transfer of hot spots.

The extracted knowledge veins can be useful, for example, in helping an entry-level researcher decide which articles to read, or in helping a cross-disciplinary worker quickly grasp the essence of a new discipline. The knowledge veins can also help researchers to establish a good knowledge system, construct the concept of discipline development, and write literature reviews.

\begin{figure}[!htp]
  \centering
  \includegraphics[width=14.8cm]{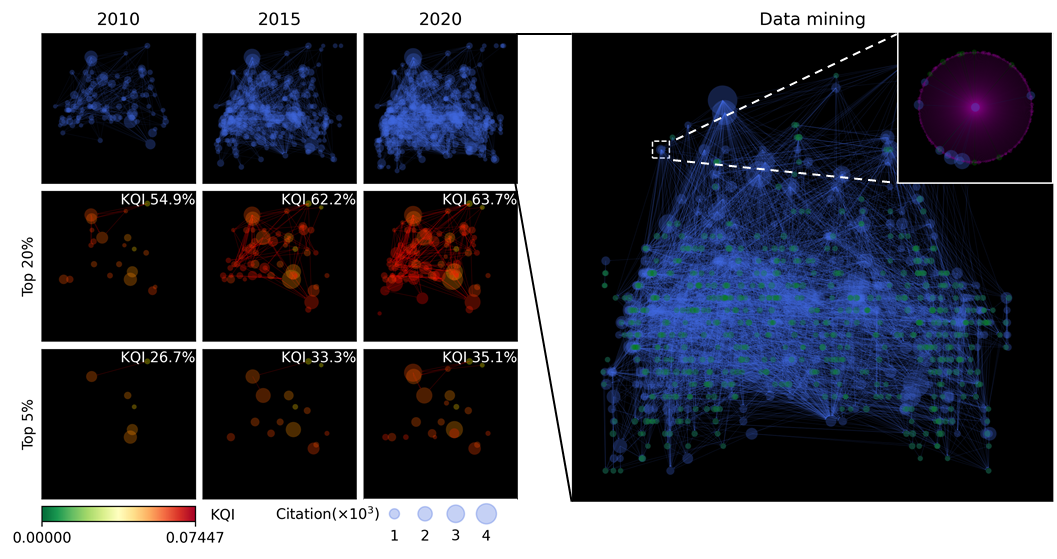} \\
  \caption[The structural vein of academic citation network of data mining.]{The structural vein of academic citation network of data mining. Here papers with citations greater than 1000 are selected as representatives. The first row is the overall picture in 2010, 2015, and 2020 (using dot\cite{dot} algorithm, drawing directed graphs as hierarchies). The second and third rows are drawn by selecting the top 20\% and 5\% literature in KQI respectively, and the content of KQI represented by the extracted structure is marked in the upper right corner. In the enlarged view of the original picture, green ones represent the papers with citations between 300 and 1000, and purple ones represent all papers.}
 \label{fig:datamining}
\end{figure}

\begin{figure}[!htp]
  \centering
  \includegraphics[width=10cm]{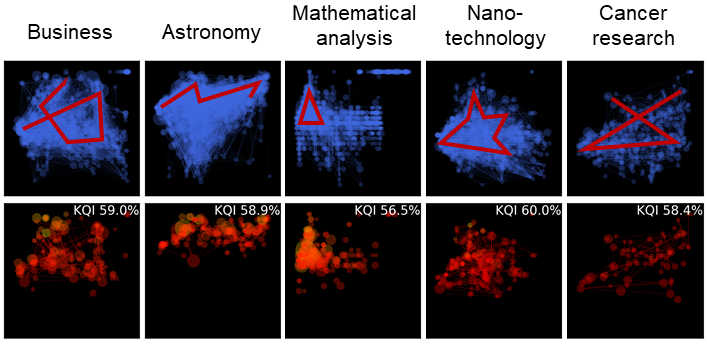} \\
  \caption[Typical veins.]{Typical veins. The selected articles with high KQI can help reflect the knowledge veins of the original domain, using less literature to cover more knowledge. The red lines are artificially marked to compare the similarities of the structures.}
 \label{fig:veins}
\end{figure}

\section{KQI-based Rankings}

We have known the KQI of every single paper, to some extent, reflects the knowledge attribute of the paper, as mentioned acceptability and dependability above (Fig. \ref{fig:dependability_acceptability}). The higher the value of KQI, the stronger the attribute of these two items of knowledge. Therefore, we try to make a ranking for papers, which can reflect the above characteristics of the papers. Then based on the ranking of papers, we also rank the authors, affiliations, and countries to further confirm the effect of our proposed KQI.

\begin{figure}[!htp]
  \centering
  \includegraphics[width=5cm]{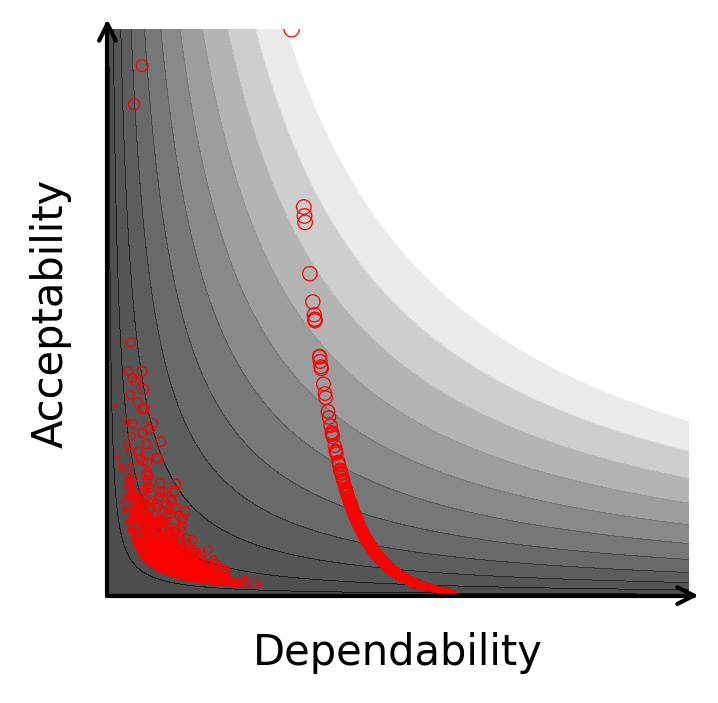} \\
  \caption[Acceptability and dependability of KQI.]{Acceptability and Dependability of KQI. A paper with a high KQI should strike a good balance between these two properties.}
 \label{fig:dependability_acceptability}
\end{figure}

\subsection{Discovery of Valuable Literature}

Taking the computer science discipline as an example, we select the top 5 papers in KQI and compared their KQI with citations year by year. They are \textit{Gradient-based learning applied to document recognition}, \textit{Very deep convolutional networks for large-scale image recognition}, \textit{Principles of neurodynamics}, \textit{Parallel rough set based knowledge acquisition using MapReduce from big data} and \textit{Pattern classification and scene analysis}. All five papers have been manually reviewed and are considered influential in the field of computer science. We can find that the KQI of a paper changes with the shift of research hotspots, which is consistent with our intuition (Fig. \ref{fig:paper}). When the hot spots move away, the KQI of the paper may decrease. For example, the research related to neural networks originated in the 1980s, but was eclipsed by the emergence of algorithms such as SVM around 1995, and flourished with deep learning in recent years. This is evident in the changes of KQI but is not reflected in the number of citations. 

\begin{figure}[!htp]
  \centering
  \includegraphics[width=10cm]{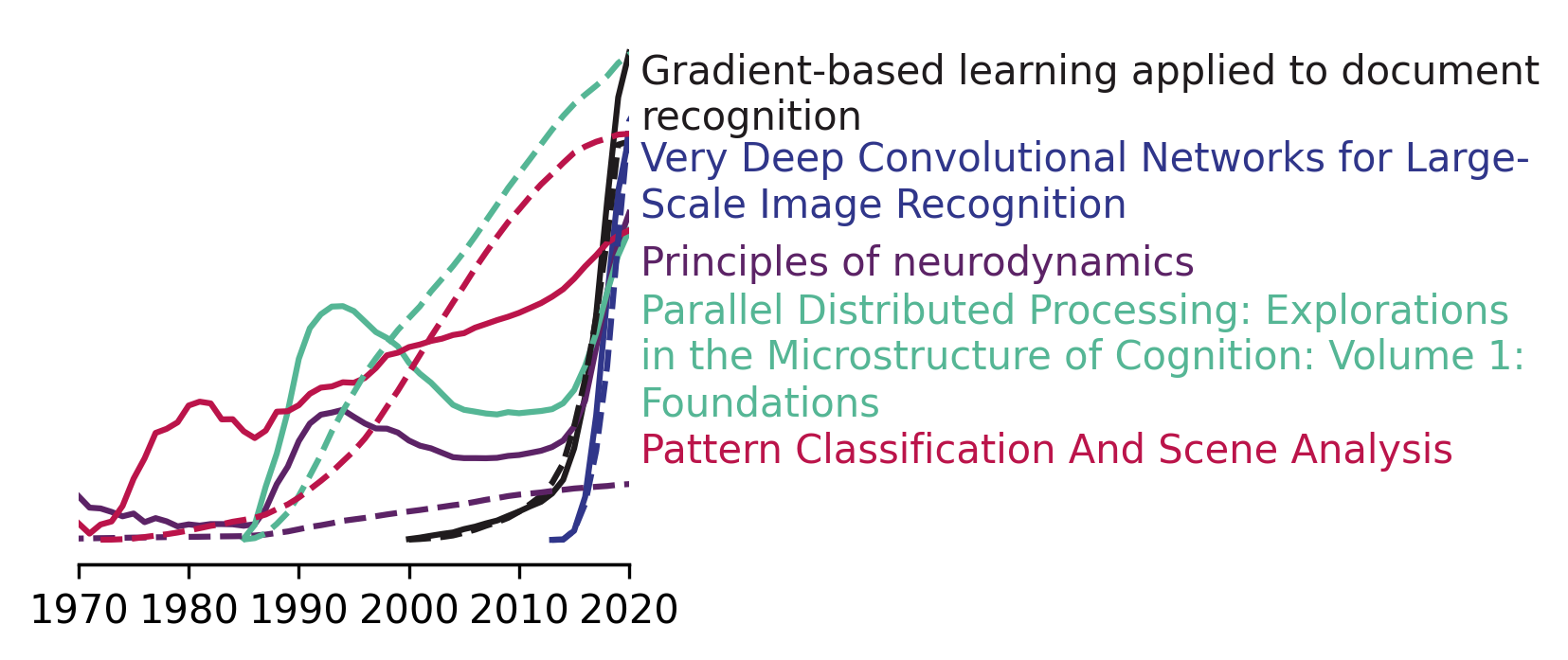} \\
  \caption[KQI versus citations.]{KQI versus citations. The top five papers of KQI in computer science are selected, and the changes of their KQI (solid) and citations (dotted) over time were mapped. Top-ranked papers do seem to have considerable value, and the KQI is related to the number of citations, representing the amount of knowledge value of the paper at the time.}
 \label{fig:paper}
\end{figure}

In addition, KQI can find valuable papers that are not highly cited. For example, in a ranking of the discipline of network coding, the top one is \textit{Network multicast flow via linear coding} published in 1998 on International Symposium on Operations Research and its Applications in engineering, technology, and management (ISORA), despite its low citation. It is verified that this paper ranks first because it has produced some high-impact papers, such as \textit{Linear network coding}, \textit{An algebraic approach to network coding} and \textit{Network coding theory}, which are famous in the field of network coding. Because high-impact papers all cite this paper, it appears to be of considerable value.

To sum up, KQI is essentially a citation-based metric, but different from citations. Specifically, there are four differences between them:
\begin{itemize}
    \item \textbf{Citations only focus on the number of references to one paper. But, KQI considers the citations of all derived papers.} This is the benefit brought by the introduction of volume. Even if the citation of a paper is not high, if it produces some influential papers, it will indirectly explain the value of this paper.
    \item \textbf{Citations will only increase, and the influence of the paper will only grow. But, KQI will increase or decrease dynamically with time, which reflects the knowledge content of the paper at a specific time.} This can be seen in the results of the experiment. When the graph structure is changing, the value of a certain paper will change accordingly, which may increase or decrease, affected by the latest research hotspots.
    \item \textbf{Citations only reflect the truth of a paper. KQI considers the truth and belief of a paper.} As mentioned in Section \ref{sec:explainKQI}, JTB theory is reflected  on KQI. Truth is reflected by citations because it is widely accepted. Belief is reflected by papers' references, which tells if the paper is based on some widely accepted paper. The KQI's formula takes into account both of these two.
    \item \textbf{Citations are easy to attack, and manipulating KQI is even more difficult.} Because the KQI depends on the structure of the entire citation network unless you have the power to break it, it will not do much damage to the KQI. KQI is slightly better at resisting attacks than citations.
\end{itemize}

If we draw a matrix corresponding relationship between knowledge and citations, we can find there are four states: \textbf{high} knowledge and \textbf{high} citations, \textbf{low} knowledge and \textbf{low} citations, \textbf{high} knowledge and \textbf{low} citations, \textbf{low} knowledge and \textbf{high} citations. Citations are qualified to the first two states, and KQI complements the latter two. For example, \textit{Knowledge matters: importance of prior information for optimization} written by Yoshua Bengio receives low citations and high KQI, however \textit{Top 10 algorithms in data mining} receives high citations and relatively low KQI. The experiments are carried out in 19 disciplines, and the top 3 of each discipline are listed in Table \ref{tab:top3}.

\vspace{7mm}

\begin{longtable}[!htp]{llr}
\caption{KQI Top@3 papers in 19 major first-level disciplines.}
\label{tab:top3} \\
\toprule
    \textbf{Field} & \textbf{Title} & \textbf{KQI} \\
\midrule
\endfirsthead
\multicolumn{3}{r}{Table \thetable (continued).} \\
\toprule
    \textbf{Field} & \textbf{Title} & \textbf{KQI} \\
\midrule
\endhead
\hline
\endfoot
\endlastfoot
	\multirow{3}{2cm}{Art} & Visual Pleasure and Narrative Cinema & 0.0147 \\
	& The Work of Art in the Age of Mechanical Reproduction & 0.0078 \\
	& The Language of New Media & 0.0065 \\
\hline
	\multirow{3}{2cm}{Biology} & Molecular cloning : a laboratory manual & 0.0405 \\
	& Atlas of protein sequence and structure & 0.0246 \\
	& DNA sequencing with chain-terminating inhibitors & 0.0167 \\
\hline
	\multirow{3}{2cm}{Business} & A Behavioral Theory of the Firm & 0.0531 \\
	& Organizations in Action & 0.0406 \\
	& The theory of buyer behavior & 0.0321 \\
\hline
	\multirow{3}{2cm}{Chemistry} & Protein Measurement with the Folin Phenol Reagent & 0.0217 \\
	& CRC Handbook of Chemistry and Physics & 0.0112 \\
	& Colorimetric method for determination of sugars and related substances & 0.0087 \\
\hline
	\multirow{3}{2cm}{Computer science} & Gradient-based learning applied to document recognition & 0.04 \\
	& Very Deep Convolutional Networks for Large-Scale Image Recognition & 0.0331 \\
	& Principles of neurodynamics & 0.0268 \\
\hline
	\multirow{3}{2cm}{Economics} & CAPITAL ASSET PRICES: A THEORY OF MARKET ... & 0.0226 \\
	& A New Approach to Consumer Theory & 0.0184 \\
	& An Evolutionary Theory of Economic Change & 0.0183 \\
\hline
	\multirow{3}{2cm}{Engineering} & Selected harmonic reduction in static D-C - A-C inverters & 0.0226 \\
	& A New Neutral-Point-Clamped PWM Inverter & 0.0198 \\
	& Fundamentals of Heat and Mass Transfer & 0.0181 \\
\hline
	\multirow{3}{2cm}{Environmental science} & Standard methods for the examination of water and wastewater & 0.0643 \\
	& Climate Change 2007: The Physical Science Basis & 0.045 \\
	& Climate Change 2001: The Scientific Basis & 0.0419 \\
\hline
	\multirow{3}{2cm}{Geography} & Climate Change 2007: Impacts, Adaptation and Vulnerability. ... & 0.0215 \\
	& Remote Sensing and Image Interpretation & 0.0184 \\
	& Climate change 2007 : the physical science basis : contribution of ... & 0.0175 \\
\hline
	\multirow{3}{2cm}{Geology} & The calculation of the directional reflectance of a vegetative canopy & 0.0158 \\
	& River flow forecasting through conceptual models part I - A ... & 0.0154 \\
	& Free software helps map and display data & 0.0117 \\
\hline
	\multirow{3}{2cm}{History} & Imperial Eyes: Travel Writing and Transculturation & 0.0067 \\
	& The Climate of History: Four Theses & 0.0058 \\
	& Provincializing Europe: Postcolonial Thought and Historical Difference & 0.0053 \\
\hline
	\multirow{3}{2cm}{Materials science} & Physics of graphite & 0.0306 \\
	& Science of fullerenes and carbon nanotubes & 0.0197 \\
	& Theory of elasticity & 0.0172 \\
\hline
	\multirow{3}{2cm}{Mathematics} & An introduction to probability theory and its applications & 0.0209 \\
	& Matrix Computations & 0.0156 \\
	& Partial differential equations & 0.0151 \\
\hline
	\multirow{3}{2cm}{Medicine} & Design and analysis of randomized clinical trials requiring prolonged ... & 0.0089 \\
	& Summary of the second report of the National Cholesterol Education ... & 0.0072 \\
	& Statistical Aspects of the Analysis of Data From Retrospective ... & 0.0069 \\
\hline
	\multirow{3}{2cm}{Philosophy} & Knowledge and Its Limits & 0.0204 \\
	& A treatise of human nature & 0.02 \\
	& Counterpart Theory and Quantified Modal Logic & 0.0174 \\
\hline
	\multirow{3}{2cm}{Physics} & Quantum Computation and Quantum Information & 0.0248 \\
	& Principles of Optics & 0.0136 \\
	& Quantum Mechanics & 0.012 \\
\hline
	\multirow{3}{2cm}{Political science} & An Economic Theory of Democracy & 0.0142 \\
	& The Nature and Origins of Mass Opinion & 0.0097 \\
	& Making Democracy Work: Civic Traditions in Modern Italy & 0.009 \\
\hline
	\multirow{3}{2cm}{Psychology} & Diagnostic and Statistical Manual of Mental Disorders & 0.0378 \\
	& An inventory for measuring depression & 0.0188 \\
	& A RATING SCALE FOR DEPRESSION & 0.017 \\
\hline
	\multirow{3}{2cm}{Sociology} & The Structure of Scientific Revolutions & 0.0193 \\
	& The Discovery of Grounded Theory & 0.0188 \\
	& Distinction: A Social Critique of the Judgement of Taste & 0.0181 \\
\bottomrule
\end{longtable}

\subsection{Authors Ranking}

To make the results more convincing, we aggregate the KQI of the papers by author, and then give a ranking of the authors. Due to the additivity of entropy, our aggregation is just to add up KQI of all the papers of an author. We do experiments in few different disciplines, and interestingly, the authors at the top of the list are actually quite influential. In order to eliminate subjective factors, the Turing Award and Nobel Prize, which are highly recognized in the academic circle, are selected as the evaluation criteria. 

In our experimental dataset, there are 6,253,122 authors in computer science. By 2020, there are 74 Turing Award winners. As shown in Table \ref{tab:topauthor}, thirty percent of the top 50 authors according to KQI are Turing Award winners, while the remaining 70 authors are also highly influential and receive honors such as the IEEE John von Neumann Medal, MacArthur Fellows Program, or MacArthur Fellowship, Frederick W. Lanchester Prize, etc. We single out the top 10,000 authors (0.16\%) according to KQI and find 71 Turing Award winners (96\%) among them (Fig. \ref{fig:turingnobel}a). The remaining three authors are Alan Perlis, James H. Wilkinson, and Kristen Nygaard. In the case of Alan Perlis, due to some objective factors, we do not have his representative works in our dataset. For James H. Wilkinson, his outstanding contribution in numerical analysis is classified into the field of mathematics by our dataset. Kristen Nygaard, who co-invented object-oriented programming and the Simula programming language with Ole-Johan Dahl, is listed as the second author. Kristen Nygaard falls behind in the rankings because we only consider the first author when dealing with the rankings. When we use the h-index to rank authors, it is far from being able to achieve this effect. Many Turing Award winners do not have a high h-index, such as Edwin Catmull, Raj Reddy, Ken Thompson, and so on.

\begin{table}[!hpt]
  \caption{Top@50 authors in computer science.}
  \label{tab:topauthor}
  \centering
  \begin{tabular}{@{}rclr|rclr@{}} 
    \toprule
    No & Author & KQI & Turing & No & Author & KQI & Turing\\ 
    \midrule
    1 & M. R. Garey & 0.04292 &  & 26 & Niklaus Wirth & 0.01695 & 1984 \\
    2 & Gerard Salton & 0.04213 &  & 27 & Vladimir Vapnik & 0.01820 &  \\
    3 & Alfred V. Aho & 0.04185 & 2020 & 28 & Bernhard Schölkopf & 0.01731 &  \\
    4 & Claude E. Shannon & 0.04118 &  & 29 & Lotfi A. Zadeh & 0.01703 &  \\
    5 & Edsger W. Dijkstra & 0.03852 & 1972 & 30 & Ronald L. Rivest & 0.01687 & 2002 \\
    6 & Donald Ervin Knuth & 0.02891 & 1974 & 31 & Maurice V. Wilkes & 0.01623 & 1967 \\
    7 & Teuvo Kohonen & 0.03595 &  & 32 & Ian T. Foster & 0.01640 &  \\
    8 & Richard O. Duda & 0.03175 &  & 33 & Bernard Widrow & 0.01637 &  \\
    9 & Dimitri P. Bertsekas & 0.02895 &  & 34 & E. F. Codd & 0.01622 & 1981 \\
    10 & C. A. R. Hoare & 0.02587 & 1980 & 35 & Richard M. Karp & 0.01593 & 1985 \\
    11 & Geoffrey E. Hinton & 0.02050 & 2018 & 36 & Judea Pearl & 0.01571 & 2011 \\
    12 & David E. Rumelhart & 0.02493 &  & 37 & Rakesh Agrawal & 0.01579 &  \\
    13 & Leonard Kleinrock & 0.02189 &  & 38 & Allen Newell & 0.01560 & 1975 \\
    14 & Thomas H. Cormen & 0.02103 &  & 39 & John E. Hopcroft & 0.01528 & 1986 \\
    15 & Marvin Minsky & 0.02019 & 1969 & 40 & Michael Stonebraker & 0.01441 & 2014 \\
    16 & Lawrence R. Rabiner & 0.02049 &  & 41 & Tim Berners-Lee & 0.01340 & 2016 \\
    17 & John McCarthy & 0.02011 & 1971 & 42 & Noam Chomsky & 0.01482 &  \\
    18 & Yann LeCun & 0.01825 & 2018 & 43 & Jim Gray & 0.01297 & 1998 \\
    19 & Richard O. Duda & 0.02004 &  & 44 & Zohar Manna & 0.01414 &  \\
    20 & Azriel Rosenfeld & 0.01988 &  & 45 & Erich Gamma & 0.01393 &  \\
    21 & Frederick Jelinek & 0.01978 &  & 46 & William H. Press & 0.01392 &  \\
    22 & Thorsten Joachims & 0.01928 &  & 47 & Ben Shneiderman & 0.01379 &  \\
    23 & John H. Holland & 0.01915 &  & 48 & A. A. Mullin & 0.01376 &  \\
    24 & David E. Goldberg & 0.01878 &  & 49 & Gene H. Golub & 0.01356 &  \\
    25 & John G. Proakis & 0.01834 &  & 50 & Whitfield Diffie & 0.01295 & 2015 \\
    \bottomrule
  \end{tabular}
\end{table}

We also conduct experiments in other disciplines, such as economics, find that KQI also singles out 85 (98\%) of the 86 winners of the Nobel Memorial Prize in Economic Sciences (Fig. \ref{fig:turingnobel}b). The only author not on the list, Leonid Vitaliyevich Kantorovich, is known for linear programming, which is classified by our dataset in the mathematical field. Therefore, the author's KQI ranking does a good job of screening out the Turing Award and Nobel prize winners, which the h-index cannot do. In fact, we just take these two well-known awards for example. There are also many famous authors ranked top by KQI, such as the father of information retrieval Gerard Salton, the father of information theory Claude Shannon, and so on.

\begin{figure}[!hbtp]
    \centering
    \subcaptionbox{KQI of Turing Award winners.}[14.8cm]{
        \includegraphics[width=14.8cm]{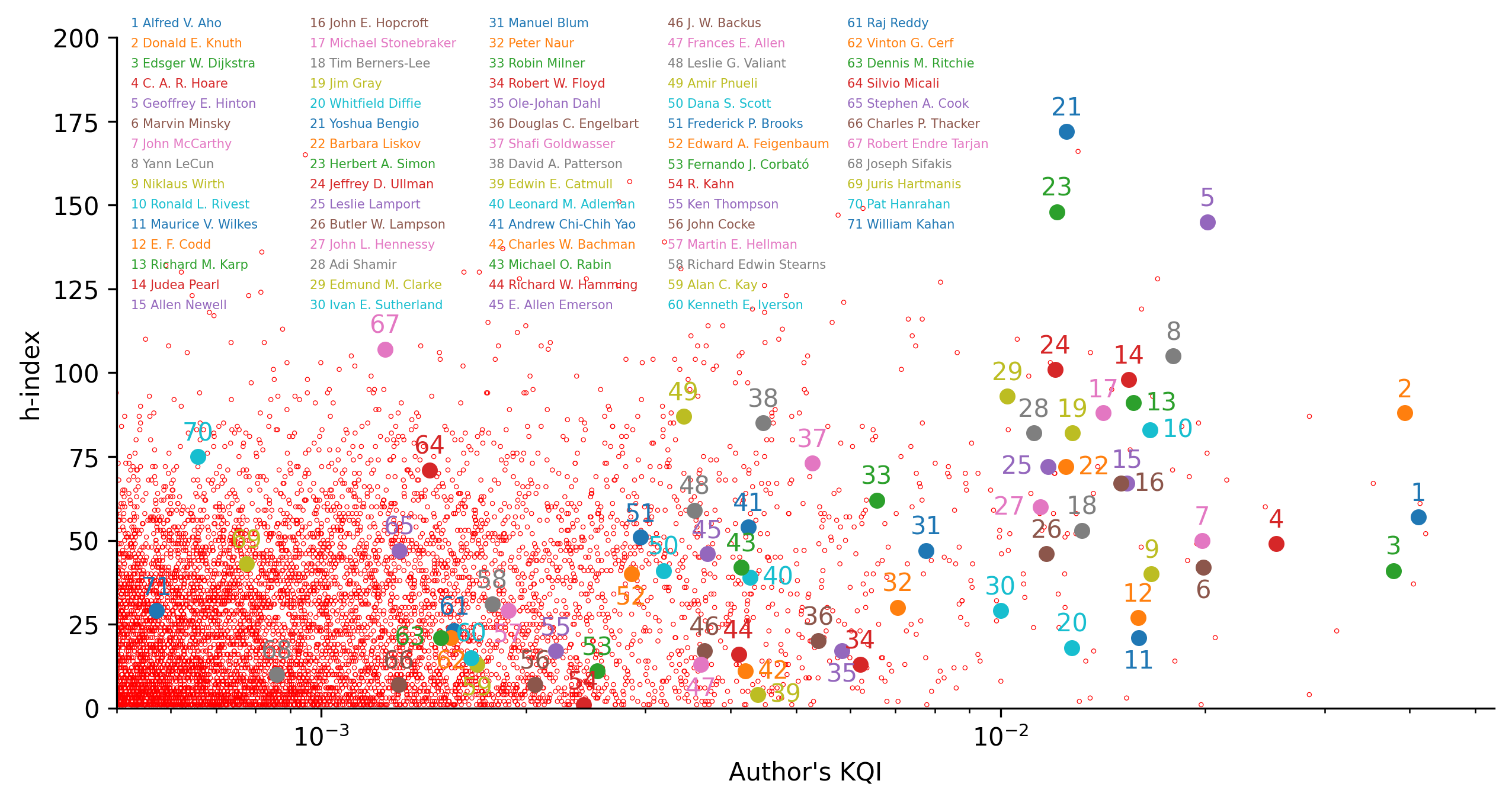}
    }
    \subcaptionbox{KQI of Nobel Prize winners.}[14.8cm]{
        \includegraphics[width=14.8cm]{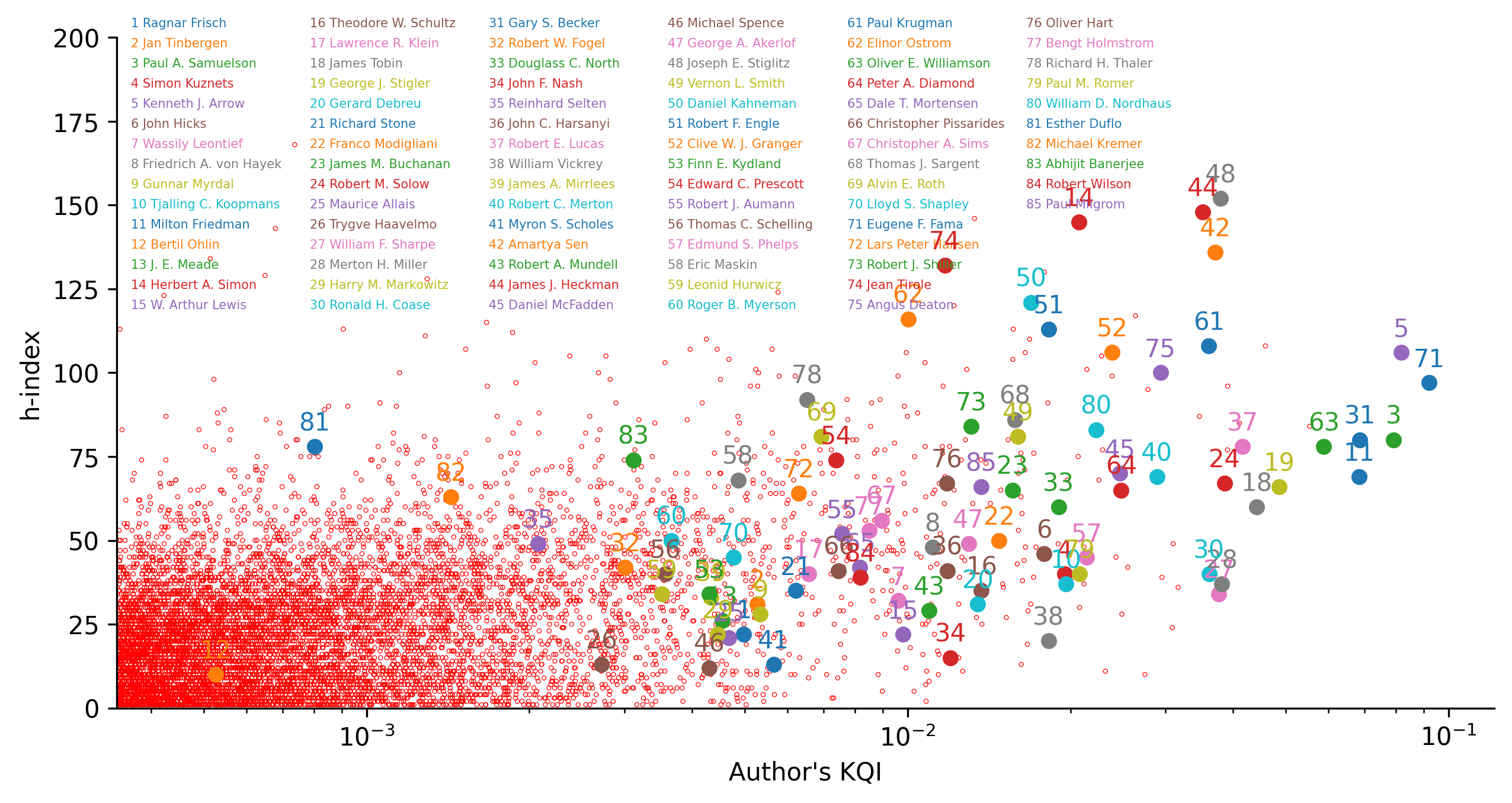}
    }
    \caption[KQI of Turing Award and Nobel Prize winners.]{KQI of Turing Award and Nobel Prize winners. By zooming in on the part of high KQI (top 10,000 authors), 71 of the 74 Turing Awards and 85 of the 86 Nobel Prizes in Economic Sciences by 2020 are included, which is far beyond the competence of the h-index.}
    \label{fig:turingnobel}
\end{figure}

\subsection{Affiliations and Countries Rankings}

Similar to author rankings, we aggregate the KQI of papers by affiliations and countries and list the top 20 affiliations and countries respectively.

Table \ref{tab:affiliation_rank} shows that Harvard, Stanford, NIH, MIT and UC Berkeley dominate the list. Among affiliations with high KQI, the top 20 are not with China. The top five Chinese affiliations are Chinese Academy of Sciences (34), Tsinghua University (175), Zhejiang University (225), Peking University (232) and Shanghai Jiao Tong University (260). The number of literature and KQI of the United States both far exceed those of other countries. Nowadays, China has almost half as much literature as the United States, but still lags far behind in the KQI (\ref{fig:country_rank}). This is also in response to a shift in China's scientific research in recent years from quantity to quality.

\begin{table}[!hpt]
  \caption{Affiliations ranking.}
  \label{tab:affiliation_rank}
  \centering
  \begin{tabular}{@{}rlrr@{}} 
    \toprule
    No & Affiliation & KQI & Literature\\
    \midrule
    1 & Harvard University & 0.43141 & 484234 \\
    2 & Stanford University & 0.30732 & 313713 \\
    3 & National Institutes of Health & 0.23750 & 239960 \\
    4 & Massachusetts Institute of Technology & 0.22077 & 236888 \\
    5 & University of California, Berkeley & 0.21931 & 277942 \\
    6 & University of Michigan & 0.18456 & 334403 \\
    7 & University of Cambridge & 0.18224 & 274418 \\
    8 & Columbia University & 0.17989 & 220260 \\
    9 & University of California, Los Angeles & 0.17820 & 268271 \\
    10 & University of Washington & 0.17401 & 272258 \\
    11 & Yale University & 0.15899 & 213039 \\
    12 & Max Planck Society & 0.15635 & 391754 \\
    13 & University of Chicago & 0.14993 & 181715 \\
    14 & University of Pennsylvania & 0.14967 & 239943 \\
    15 & University of Wisconsin-Madison & 0.14093 & 232613 \\
    16 & Cornell University & 0.14011 & 224910 \\
    17 & University of California, San Diego & 0.13731 & 188257 \\
    18 & University of Minnesota & 0.13681 & 274059 \\
    19 & Johns Hopkins University & 0.13138 & 209663 \\
    20 & University of Oxford & 0.13092 & 275976 \\
    \hline
    \hline
    34 & Chinese Academy of Sciences & 0.09036 & 522360 \\
    175 & Tsinghua University & 0.02592 & 165012 \\
    225 & Zhejiang University & 0.02159 & 163465 \\
    232 & Peking University & 0.02110 & 148313 \\
    260 & Shanghai Jiao Tong University & 0.01865 & 167006 \\
    299 & Fudan University & 0.01602 & 101690 \\
    325 & University of Science and Technology of China & 0.01403 & 77018 \\
    343 & Nanjing University & 0.01318 & 92568 \\
    367 & Huazhong University of Science and Technology & 0.01213 & 114264 \\
    390 & Sun Yat-sen University & 0.01145 & 97154 \\
    \bottomrule
  \end{tabular}
\end{table}

\begin{figure}[!htp]
    \centering
    \includegraphics[height=8cm]{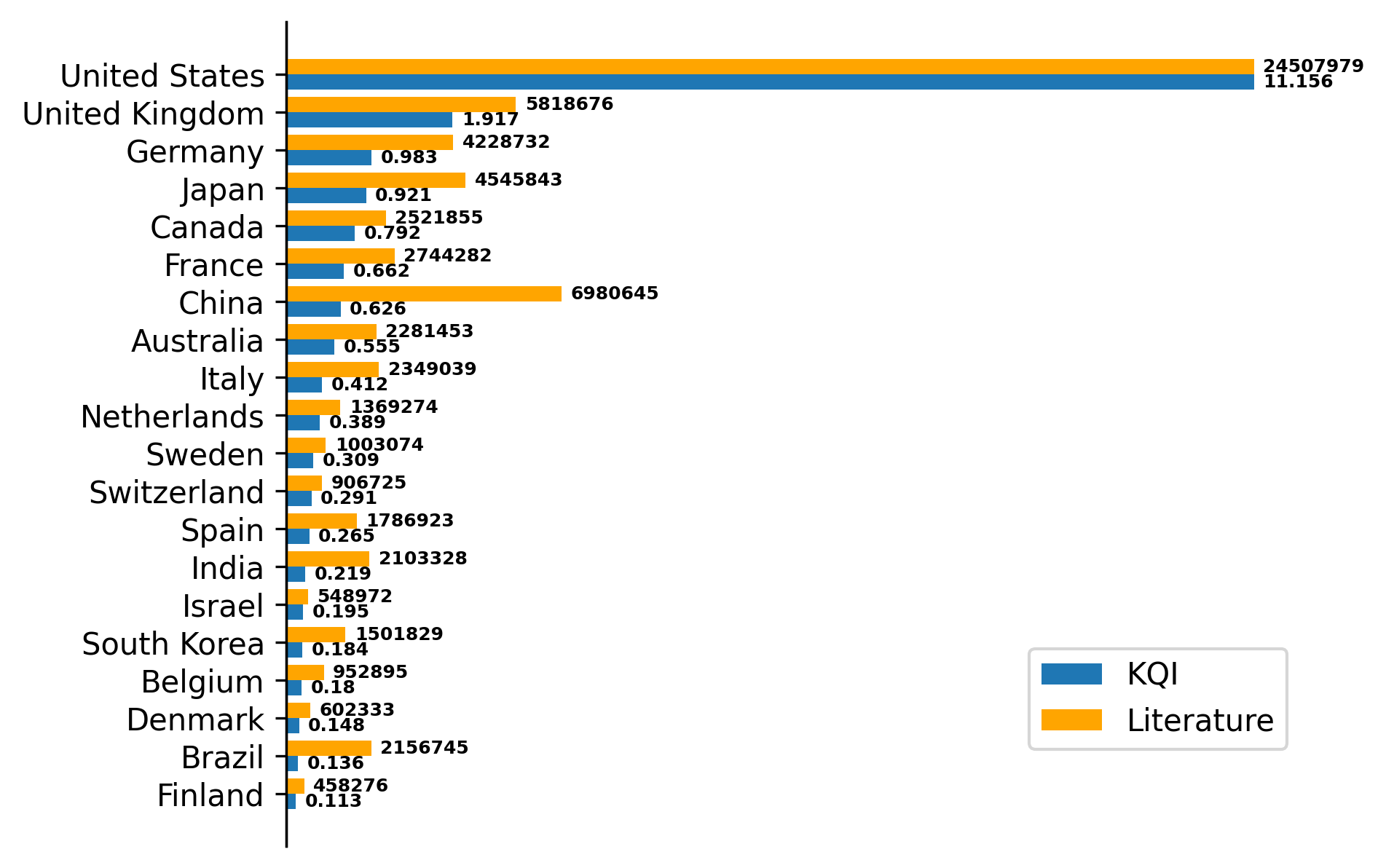} \\
    \caption{Countries ranking.}
    \label{fig:country_rank}
\end{figure}

\section{KQI vs. Other Classic Metrics}

From the above ranking, depending on the artificial experience, it can be considered that KQI has certain effectiveness. In order to further demonstrate the differences between our proposed KQI and other metrics, we randomly sample 10,000 papers, 10,000 authors, and 10,000 journals from the dataset, and draw a scatter plot of their KQI with PageRank, h-index, and impact factor respectively.

It is obvious that PageRank and KQI are positively correlated, but they're not exactly the same (Fig. \ref{fig:comparison}a). As a traditional method to measure the importance of nodes in a graph, PageRank performs a random walk on the graph to rank nodes by their information flow.  However, this method only tells the popularity, which is unequal to the knowledge. We are more interested in the quantity, value, and minimum redundancy of knowledge. To be more specific, KQI is better than PageRank in the following aspects:
\begin{itemize}
    \item \textbf{Interpretability.} PageRank is just a state of balanced information flow, which expresses influence and lacks interpretation at the knowledge level. KQI expresses the structure reflected by the difference between Shannon entropy and structural entropy, which is related to the meaning of knowledge.
    \item \textbf{Formulation.} PageRank can be viewed as a subset of KQI, that is, PageRank expresses a similar meaning with the volume variable $V$ mentioned in Eq. \ref{eq:KQI}. In this sense, KQI is more advanced than PageRank.
    \item \textbf{Complexity.} The algorithm complexity of PageRank depends on the number of iterations required to achieve convergence, while KQI only needs to traverse every node in the graph once in the preparation stage, and then KQI can be calculated with a constant complexity, and the algorithm complexity is more stable.
    \item \textbf{Additivity.} KQI bases on entropy, the difference between Shannon entropy and structural entropy, so KQI inherits the additivity of entropy while PageRank does not. For nodes in the network, KQI can be aggregated by summing up by any combination.
\end{itemize}

Some works have proposed that the role of h-index is equivalent to citations\cite{hindexbad} and still bound by citations, although h-index measures both the productivity and citation impact of a scientist. We find a weak correlation between the h-index and KQI. For high h-index scholars, KQI is usually not too bad (Fig. \ref{fig:comparison}b). However, compared with their h-index, KQI corrects those scholars who exploit the h-index loophole to a certain extent. Besides, h-index often buries some outstanding scholars, such as the mentioned Turing Award and Nobel Prize winners, which are included in KQI.

Also, some works questioned impact factor for abuse\cite{rethinkingIF}, although impact factor is frequently used as a proxy for the relative importance of a journal within its field. Our experiment confirms that the impact factors of journals have a limited role in determining the value of their published papers, which fits our intuition. It can only be inferred that journals with larger impact factors are less likely to receive bad articles, but it cannot be inferred that journals with smaller impact factors have no valuable articles (Fig. \ref{fig:comparison}c). This has considerable guiding significance for us. There is no need to be obsessed with authoritative journals. The quality of the articles should not be evaluated directly by the level of the journal, but by the value of the article itself.

\begin{figure}[!hbtp]
    \centering
    \subcaptionbox{Comparison with PageRank.}[4.5cm]{
        \includegraphics[height=4cm]{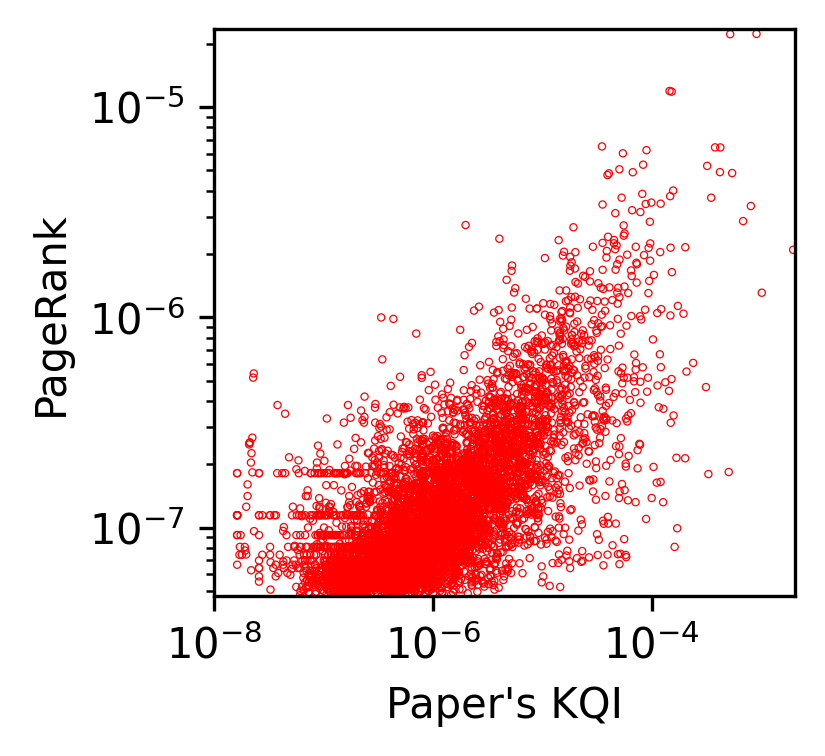}
    }
    \hspace{0.2cm}
    \subcaptionbox{Comparison with h-index.}[4.5cm]{
        \includegraphics[height=4cm]{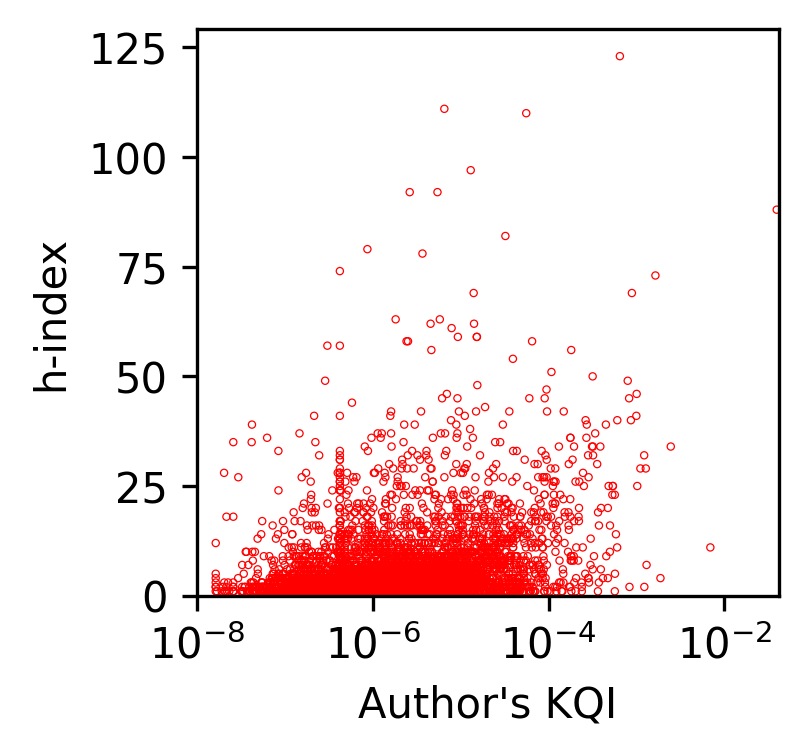}
    }
    \hspace{0.2cm}
    \subcaptionbox{Comparison with impact factor.}[4.5cm]{
        \includegraphics[height=4cm]{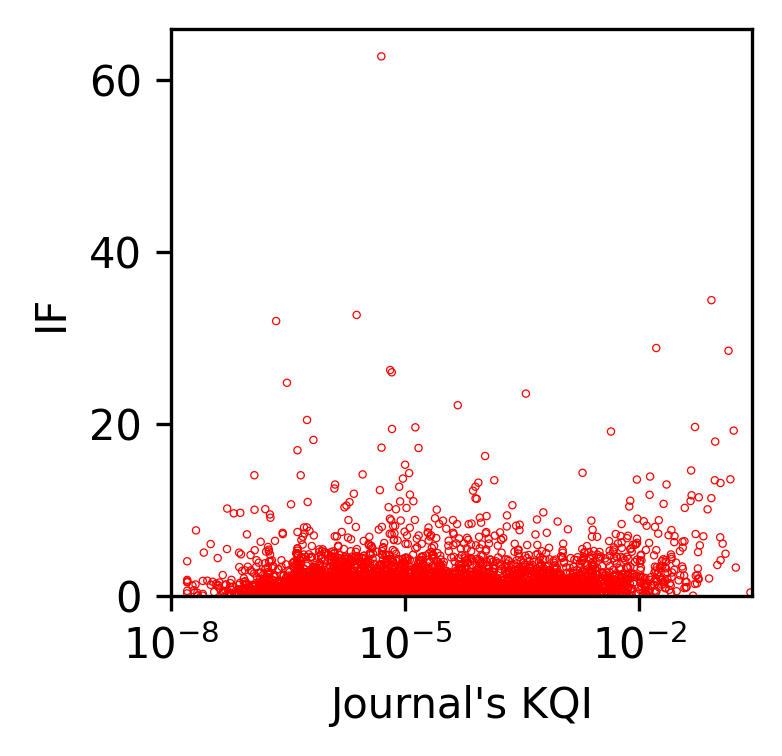}
    }
    \caption[Comparison among h-index, impact factor, PageRank and KQI.]{Comparison among h-index, impact factor, PageRank and KQI. KQI and PageRank show a significant positive correlation. The scholars with high h-index (citations) are more inclined to have higher KQI, and the impact factor of the journal has little relationship with its KQI.}
    \label{fig:comparison}
\end{figure}


\chapter{Case Study: Examples on Three Disciplines}
\label{chapter:casestudy}

In this chapter, we will take the fields of channel capacity and deep learning as examples to do a case study. The specific content includes the relationship between extraordinary literature and ordinary literature, the rise and fall of the disciplines reflected by KQI, and the knowledge veins diagram of the two fields.

\section{Extraordinary Literature set off by Ordinary Literature}

Fig. \ref{fig:redgreen} shows the KQI of \textit{The capacity of wireless networks} and changes in the field of channel capacity. \textit{The capacity of wireless networks} was born in 2000. In earlier years, it also served as a foil to other articles with high KQI. Later, it was set off by other ordinary articles and gradually became extraordinary. It is considered an extraordinary article because it is supported by millions of ordinary articles. Low KQI does not completely suggest that articles are not good. They are just the heroes behind extraordinary ones.

\begin{figure}[!htp]
    \centering
    \includegraphics[height=8cm]{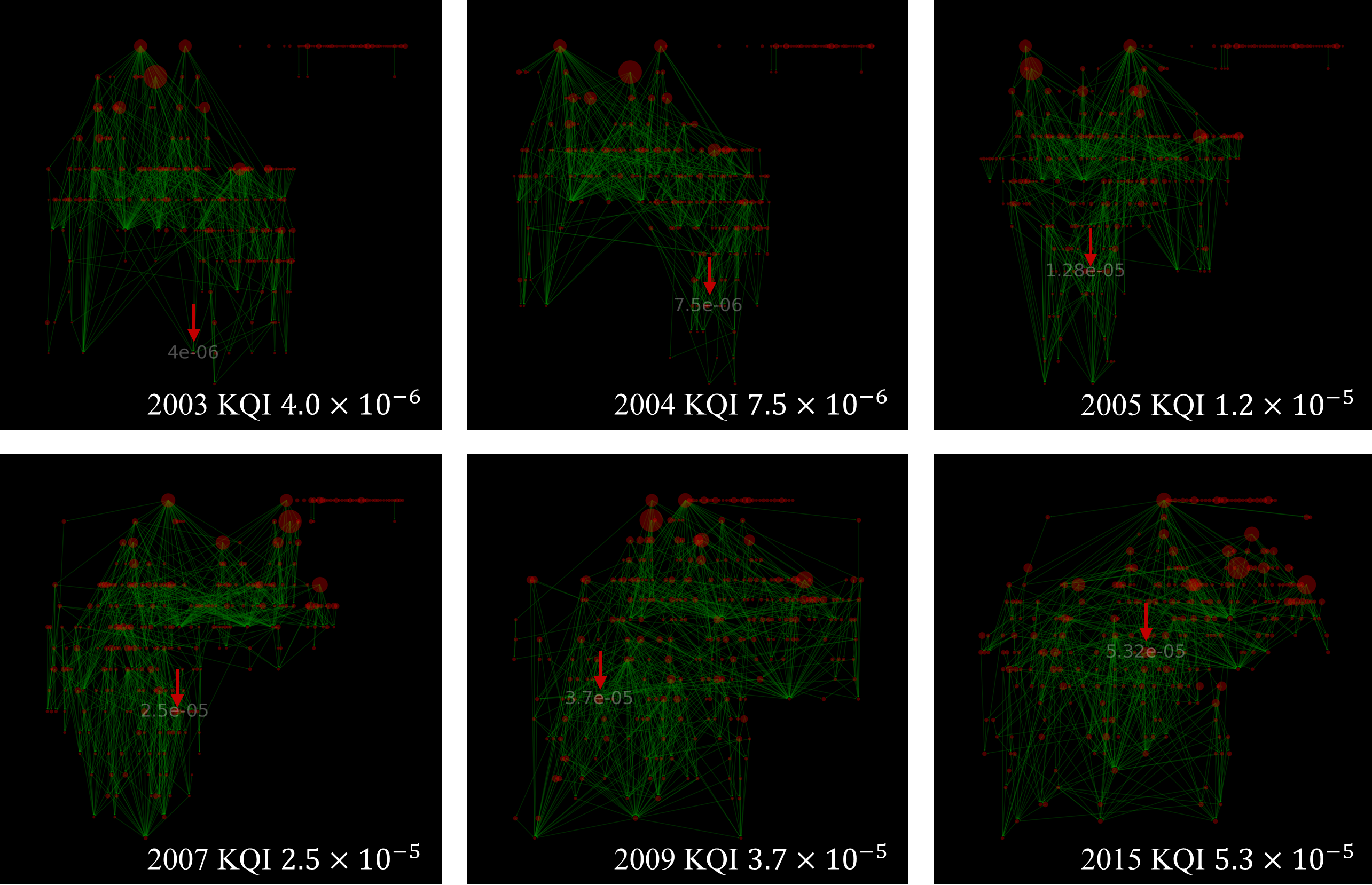} \\
    \caption{A tour with \textit{The capacity of wireless networks}.}
    \label{fig:redgreen}
\end{figure}

\section{Rise and Fall of Disciplines Reflected by KQI}

Fig. \ref{fig:casestudy_trend} shows that, the field of deep learning began to develop in 1989, then entered a flat period, and reached its peak around 2016. However, the field of channel capacity began to develop from \textit{Communication in the presence of noise} by Shannon in 1949, then boom in 2000 because of \textit{The capacity of wireless networks}, and finally stagnated in recent years.

\begin{figure}[!htp]
    \centering
    \includegraphics[height=8cm]{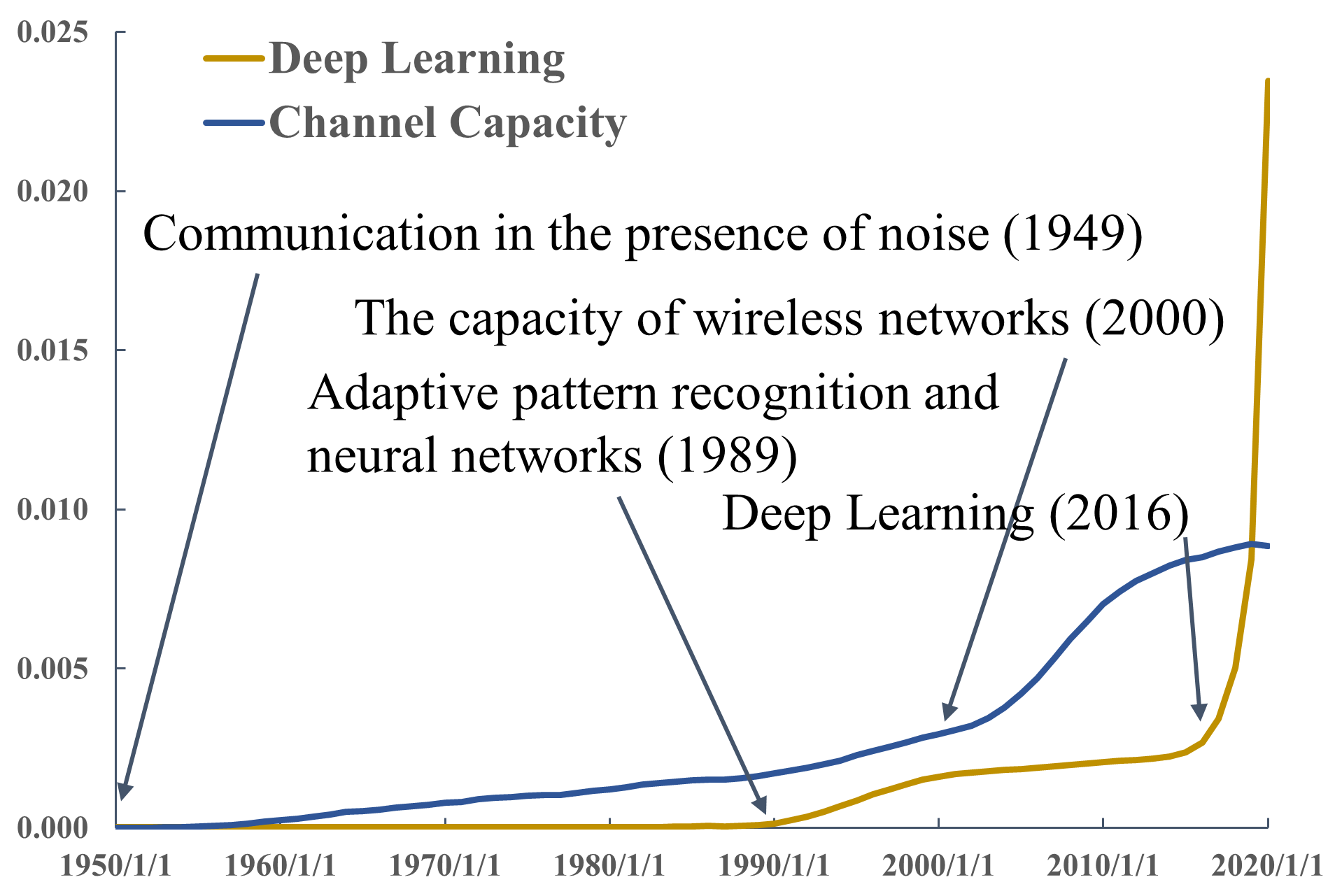} \\
    \caption{KQI trends of channel capacity and deep learning.}
    \label{fig:casestudy_trend}
\end{figure}

\section{Influential Authors with Top KQI}

Table \ref{tab:geoscience} shows the top 20 authors in field of geoscience. Most of them have great contributions and receive many awards, like members of NAS, fellows of AMS, etc.

\begin{table}[!hpt]
  \caption{Top@20 authors in geoscience.}
  \label{tab:geoscience}
  \centering
  \begin{tabular}{rcrl} 
    \toprule
    No & Author & KQI & Remarks \\
    \midrule
    1 & Syukuro Manabe & 0.00102 & Member of the National Academy of Sciences \\
    2 & Kevin E. Trenberth & 0.00072 & Fellow of the American Meteorological Society \\
    3 & Robert A. Berner & 0.00059 & Member of the National Academy of Sciences \\
    4 & Harold Jeffreys & 0.00045 & Fellow of the Royal Society\\
    5 & Dan McKenzie & 0.00044 & Fellow of the Royal Society \\
    6 & Richard W. Reynolds & 0.00043 & \\
    7 & Minze Stuiver & 0.00039 & Geological Society of America's Penrose Medal \\
    8 & J. Smagorinsky & 0.00038 & Known for General Circulation Mode \\
    9 & Philip D Jones & 0.00037 & Fellow of the American Geophysical Union\\
    10 & Lucia Solórzano & 0.00037 & \\
    11 & C. F. Ropelewski & 0.00036 & \\
    12 & Wallace S. Broecker & 0.00035 & Known for global warming \\
    13 & John M. Wallace & 0.00035 & Symons Gold Medal of the Royal Meteorological Society\\
    14 & Thomas R. Karl & 0.00035 & Verner E. Suomi Award by the American Meteorological Society \\
    15 & J.E. Nash & 0.00034 & \\
    16 & Piers J. Sellers & 0.00034 & American Meteorological Society Houghton Award \\
    17 & Samuel Epstein & 0.00032 & Known for carbonate paleothermometry\\
    18 & James E. Hansen & 0.00030 & Known for Climate models\\
    19 & Martin J. Buerger & 0.00029 & Member of the National Academy of Sciences \\
    20 & Roland von Huene & 0.00029 & Gustav Steinmann Medal\\
    \bottomrule
  \end{tabular}
\end{table}

\section{Knowledge Veins Diagram}

Fig. \ref{fig:Channel_capacity_vein} shows that the field of channel capacity originated from the \textit{Communication in the presence of noise} proposed by Shannon in 1949, which aroused wide discussion. And there were a lot of important developments around the year 2000.

Fig. \ref{fig:Deep_learning_vein} shows that the field of deep learning can be traced back to \textit{Approximation by superpositions of a sigmoidal function} in 1989 and \textit{The wake-sleep algorithm for unsupervised neural networks} in 1995. After 2010, there were a lot of exciting works: ImageNet, CNN, dropout, and so on.

Fig. \ref{fig:geoscience_vein} shows that the field of geoscience appears in two stages of development. Around the 1950s, carbon isotope methods flourished, and around the 1990s, global climate analysis became a hot topic.

\begin{figure}[!htp]
    \centering
    \includegraphics[height=14.8cm]{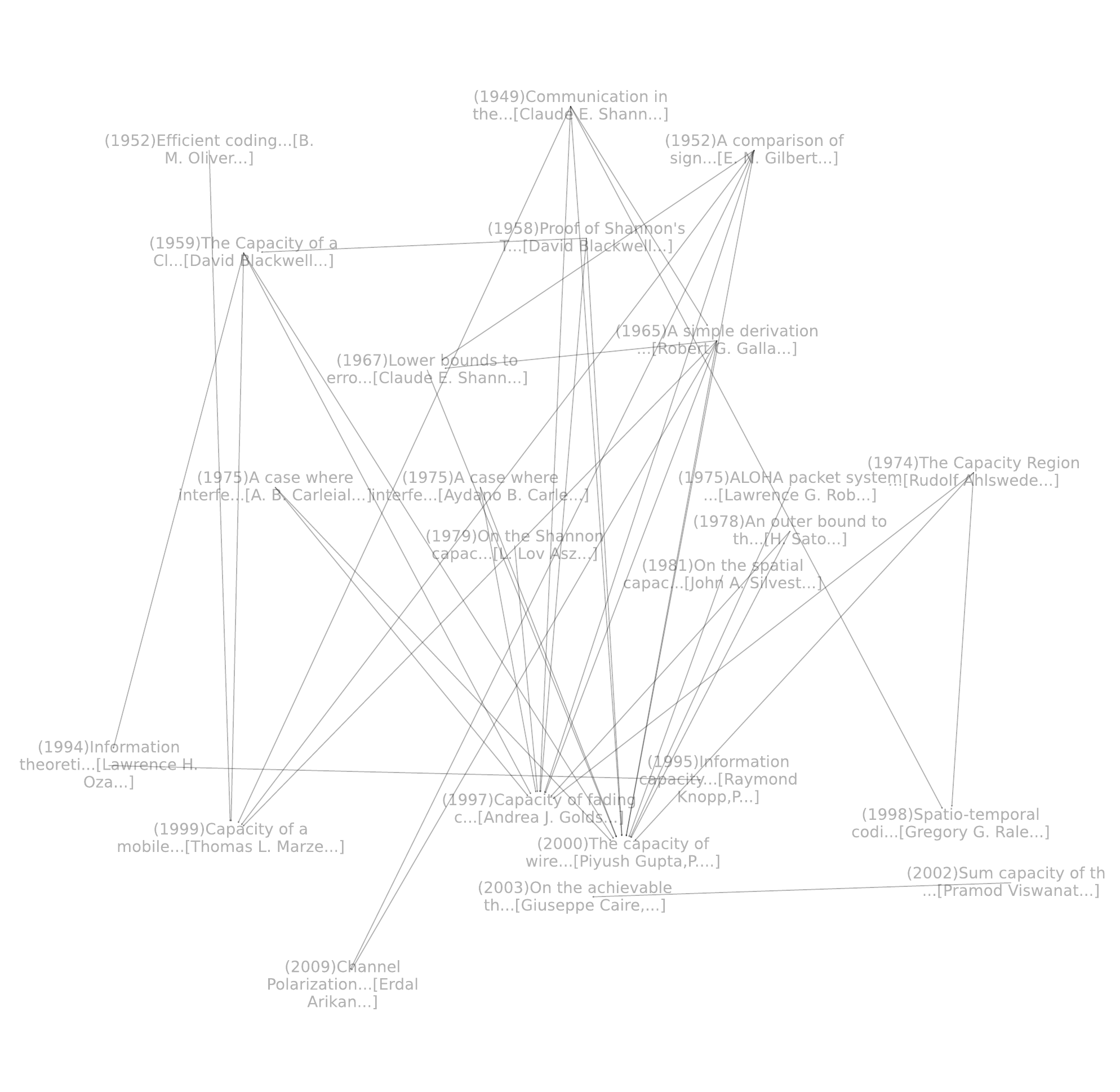} \\
    \caption{Knowledge vein of channel capacity.}
    \label{fig:Channel_capacity_vein}
\end{figure}

\begin{figure}[!htp]
    \centering
    \includegraphics[height=14.8cm]{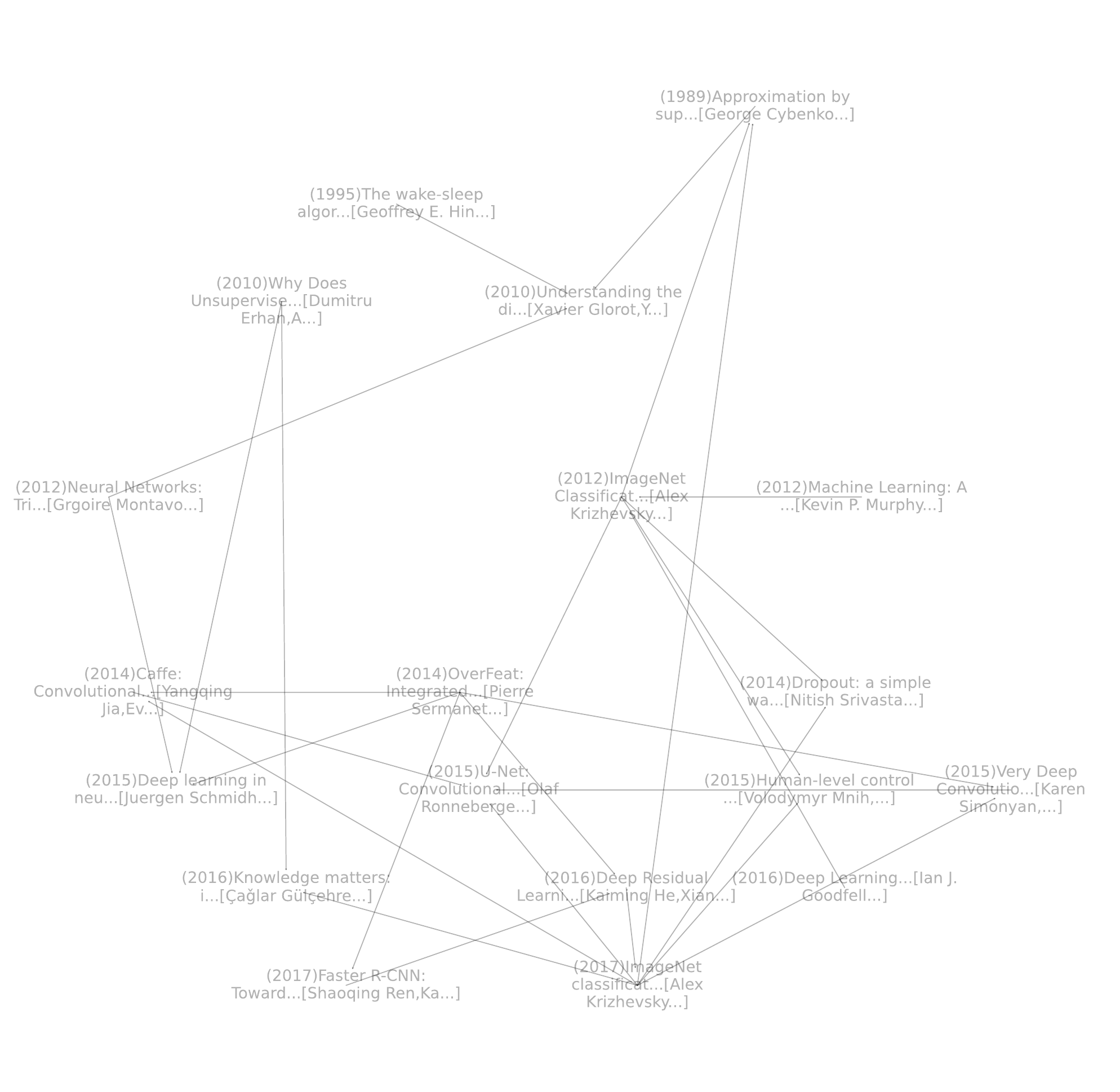} \\
    \caption{Knowledge vein of deep learning.}
    \label{fig:Deep_learning_vein}
\end{figure}

\begin{figure}[!htp]
    \centering
    \includegraphics[height=14.8cm]{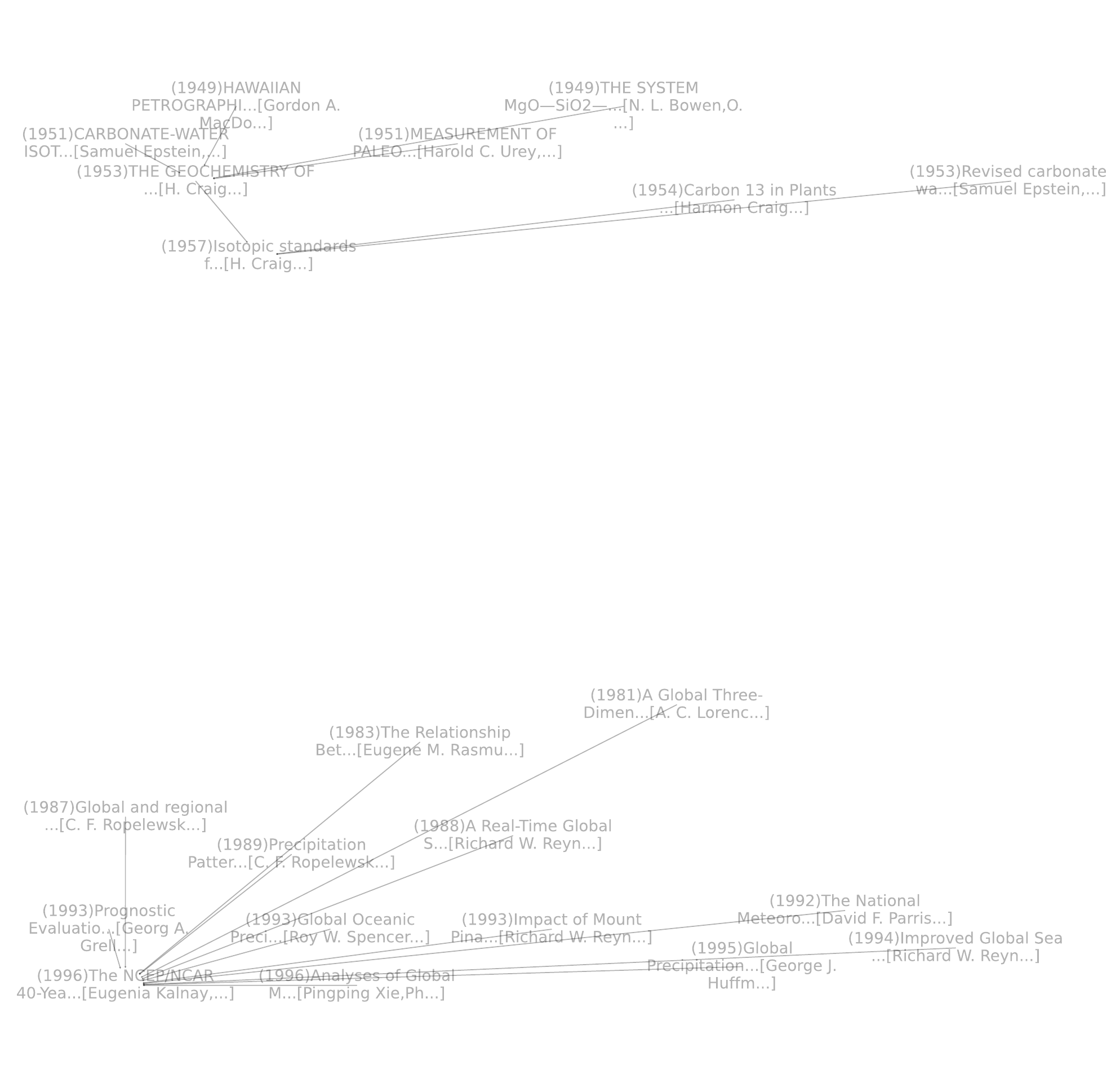} \\
    \caption{Knowledge vein of geoscience.}
    \label{fig:geoscience_vein}
\end{figure}


\begin{summary}

In this paper, we proposed KQI - a knowledge quantity index for evolving directed acyclic graph. The idea of KQI is to filter out structural characteristics from Shannon entropy and structure entropy. We use the Knowledge Tree as the partitioning tree to derive the formula and algorithm of KQI. A large number of experiments on academic dataset show that KQI obeys the linear growth law and 80/20 rule, and there is knowledge boom threshold on which knowledge growth begins to accelerate. Based on KQI, we also put forward the method of knowledge veins extraction, as well as the ranking of papers, authors and affiliations. Measurements in different disciplines show that KQI has significant effect in knowledge measurement, in comparison with other indicators.

In the process of academic development, the scientific productivity seems to present a miraculous accelerated growth, but of knowledge is a linear growth, when no accident. This accident refers to the acceleration in the growth of knowledge when a field of research has just reached the point of sufficient and mastery to be able to associate knowledge with ease. To some extent, this confirms the necessity of increasing government investment in scientific research, for sustainable development of knowledge. It also shows that we don't need to be scared by the explosion of knowledge, but just deal with the exhaustion of the explosion of information. Researchers can refer to KQI to select more promising disciplines or disciplines that are easier to start. They can also select papers with a higher KQI to quickly grasp the key concepts and methods of a new discipline, which will be more conducive to interdisciplinary research.

KQI is only a preliminary exploration of knowledge quantification, but it has shown exciting effects and has a broad exploration prospect in the future. For example,
\begin{itemize}
    \item Due to the arbitrariness of academic citation brought about by author relationship\cite{authorship}, evaluation indexes are often attacked by some scholars, such as self-citing and quoting irrelevant papers. Although KQI can do much better than citations, a thorough solution to this problem requires further consideration of the difference in the importance of references.
    \item It can be seen that the measurement of knowledge quantity in this paper is only from the perspective of structure, and we are also considering incorporating semantics to achieve better measurement effect. 
    \item Noting the remarkable effect of KQI in the extraction of knowledge veins, and we are also trying to construct the history of human academic development context based on this.
    \item At present, KQI only considers the volume of knowledge. In the future, it can continue to consider the value of knowledge into the indicators to make KQI more perfect.
\end{itemize}

According to the results presented in this article, perhaps scientific research will be accessible to everyone in the future, with academic inflation always taking place for knowledge. Despite the explosion in scientific productivity, calm down, we are only walking on the trail of exploring knowledge.

\end{summary}




\backmatter

\printbibliography[heading=bibintoc]






\end{document}